\newtheorem{thm}{Theorem}[section]
\newtheorem{lem}[thm]{Lemma}
\newtheorem{rem}[thm]{Remark}
\numberwithin{equation}{section}
\begin{document}
\title{Error Estimates for Multinomial Approximations of American Options in Merton's Model}

 \author{Yan Dolinsky\\
 Department of Mathematics\\
Hebrew University, Jerusalem,
Israel }%

\address{
 Department of Mathematics\\
Hebrew University, Jerusalem, Isreel
 {e.mail: yann1@math.huji.ac.il}}

 \date{20.11.2009}

\begin{abstract}
We derive error estimates for multinomial approximations of American options in a multidimensional jump--diffusion Merton's model.
We assume that the payoffs are Markovian and satisfy Lipschitz type conditions. Error estimates for such type of approximations
were not obtained before. Our main tool is the strong approximations theorems
for i.i.d. random vectors which were obtained in \cite{S}.
For the multidimensional Black--Scholes model our results can be extended also to a general path dependent payoffs which satisfy Lipschitz type conditions.
For the case of multinomial approximations of American options for the Black--Scholes model our estimates are a significant improvement of those which were obtained in \cite{Ki3} (for game options in a more general setup).\end{abstract}
\subjclass[2000]{Primary: 91B24, 91B28 Secondary: 60F15, 91B30}
\keywords{optimal stopping, American options, Merton's model, strong approximation theorems.}%
\maketitle
\markboth{Y.Dolinsly}{Error Estimates for American Options}
\renewcommand{\theequation}{\arabic{section}.\arabic{equation}}
\pagenumbering{arabic}
%\end{frontmatter}

\section{Introduction}\label{sec:1}\setcounter{equation}{0}
This paper deals with multinomial approximations of American options arbitrage--free prices in the multidimensional jump--diffusion Merton's model with finite horizon. The Merton model is a generalization of the Black--Scholes model and it allows the stock to have jumps of compound Poisson type.
We consider a Markovian payoffs which satisfy Lipschitz type conditions and derive error estimates for an appropriate multinomial approximations.
In the multidimensional Black--Scholes model these error estimates can be also derived for general path dependent payoffs which satisfy
Lipschitz type conditions.

For American options in finite horizon Merton's model arbitrage--free prices can not be calculated explicitly.
Since multinomial models are active on a discrete set of times and defined on a discrete probability space, then arbitrage--free prices in these models
can be calculated efficiently by dynamical programming algorithm. Thus convergence results for multinomial approximations provides an efficient tool to evaluate arbitrage free prices in the Merton model.

Several papers dealt with multinomial approximations of American options in models with jumps (see, \cite{M} and \cite{MSS}).
In both of the papers the authors considered the case with one risky asset and used the weak convergence approach. The main tool that was used in the above papers is the stability results for Snell's envelopes under weak convergence which were obtained in \cite{MP}. In \cite{D} the weak convergence approach was used to show convergence results for multinomial approximations of game options (which were introduced in \cite{Ki1}) in the Merton model. The main disadvantage of the weak convergence approach is that this machinery can not provide, in principle, speed of convergence estimates. Thus no error estimates were obtained for multinomial approximations of American options in the Merton model.

Clearly, from practical point of view it is valuable to find estimates of the corresponding errors. In order to obtain error estimates we should consider all the market models on the same probability space, and so methods based on strong approximation theorems come into picture.
Strong approximation theorem allows to construct a probability space which contain all the markets models such that
the risky assets in the discrete time models will be "close" with respect to the $\sup$ norm to the risky assets in the continuous model.
Several authors applied strong invariance principles in order to obtain error estimates for American and game options in the one dimensional BS model
(see, \cite{LR},\cite{Ki2} and \cite{DK3}). In all of these papers the authors used the Skorokhod embedding tool of i.i.d. random variables into the one dimensional Brownian motion. This tool can not be applied for the multidimensional case. In \cite{Ki3} the author studied discrete time approximations of Dynkin's game values for the multidimensional Brownian motion. The main tool that was used there is strong approximation theorems which were developed in \cite{BP} and they work for sequences of random vectors with close characteristic functions.

In \cite{S} the author considered a new approach to strong approximations in the multidimensional case.
He showed that for a given sequence of i.i.d. random vectors $X^{(1)},X^{(2)}....$, and a random vector $Y$ which has the same expectation and covariance matrix as $X^{(1)}$, it is possible to construct a sequence of i.i.d. vectors $Y^{(1)},Y^{(2)}....$ such that $Y^{(1)}\sim Y$
and the normalized sums of the last sequence will be "close" to the normalized sums of the first sequence. Furthermore for any  $k<m$, $Y^{(k)}$ is independent of $X^{(m)}$. The above approach will be the main tool that we use in order to establish the results in this paper.
For Merton's model strong approximation theorems were not used before. An interesting question which is still open, is whether the method from \cite{S} can be applied for game options approximations.

Let us remark, that the estimates for the Brownian motion by means of normalized sums of independent random vectors obtained in \cite{S}
are much better than those which were obtained in \cite{BP}.

Main results of this paper are formulated in the next section where
we also introduce the notations that will be used. In Section 3 we
derive auxiliary lemmas that we use. In Section 4 we
complete the proof of main results of the paper.

\section{Preliminaries and main results}\label{sec:2}\setcounter{equation}{0}
Consider a complete probability space ($\Omega,
\mathcal{F},P)$ together with a standard $d$--dimensional
continuous in time Brownian motion \{$W(t)=(W_1(t),...,W_d(t))\}_{t=0}^\infty$, a Poisson process $\{N(t)\}_{t=0}^\infty$ with intensity $\lambda$ and independent of $W$, and a sequence of i.i.d. random vectors $\{U^{(i)}=(U^{(i)}_1,...,U^{(i)}_d)\}_{i=1}^\infty$ with values in $(-1,\infty)^d$, independent of $W$ and $N$. We also assume that for any $1\leq j\leq d$, $E|U^{(1)}_j|^2<\infty$ where $E$ denotes the expectation with respect to $P$.
A $d$--dimensional Merton's model with horizon $T<\infty$ consists of a savings account which given by
\begin{equation}\label{2.4}
\begin{split}
b(t)=b(0)\exp(rt),  \ b(0),r>0,  \ 0\leq t\leq T
\end{split}
\end{equation}
and of $d$ risky stocks ${\{S(t)=(S_1(t),...,S_d(t))\}}_{t=0}^T$ given by
\begin{equation}\label{2.5}
\begin{split}
S_i(t)=S_i(0)\exp((r+\mu_i-{\sum_{j=1}^d \sigma^2_{ij}}/{2})t
 +\sum_{j=1}^d\sigma_{ij} W_j(t))\prod_{j=1}^{N(t)}(1+U^{(j)}_i), \  1\leq i\leq d
\end{split}
\end{equation}
where $\sigma=(\sigma_{ij})_{1\leq i,j\leq d}$ is a nonsingular matrix, $S_i(0)>0$
and
without loss of generality we assume that
\begin{equation}\label{2.6}
\begin{split}
\mu_i=-\lambda EU^{(1)}_i, \  1\leq i\leq d.
\end{split}
\end{equation}
Consider an American option with the payoff process
\begin{equation}\label{2.7}
\begin{split}
Y(t)=F(S(t),t),  \  0\leq t\leq T
\end{split}
\end{equation}
where
$F:\mathbb{R}^d_{+}\times\mathbb{R}_{+}\rightarrow\mathbb{R}_{+}$ is a function such that
for some constant
$L\geq{1}$ and for any $t\geq{s}\geq{0}$ and
$\upsilon,\tilde{\upsilon}\in\mathbb{R}^d$,
\begin{equation}\label{2.7+}
|F(\upsilon,t)-F(\tilde{\upsilon},s)|\leq L \sum_{i=1}^d |\upsilon_i-\tilde{\upsilon}_i| +
L(t-s)(1+\sum_{i=1}^d |\upsilon_i|).
\end{equation}
Let $\mathcal{T}$ be the set of stopping times with respect to the natural filtration generated by $S$ (which satisfies the usual conditions) with values not exceeding $T$.
The equality (\ref{2.6})
guaranties that the probability measure $P$ is a martingale measure. Thus the term
\begin{equation}\label{2.8}
V=\sup_{\tau\in\mathcal{T}}E(\exp(-r\tau)Y(\tau))
\end{equation}
gives an arbitrage--free price for the American option.

For any $n\in\mathbb{N}$ define a sequence of i.i.d. random vectors ${\{U^{n,i}=(U^{n,i}_1,...,U^{n,i}_d)\}}_{i=1}^\infty$ by
\begin{equation}\label{2.8+}
\begin{split}
U^{n,i}_j=\sum_{k=1}^{M(n)} (\frac{k}{2}n^{-1/8}-1)\mathbb{I}_{\{\frac{k-1}{2}n^{-1/8}-1<U^{(i)}_j\leq \frac{k}{2}n^{-1/8}-1\}}, \ 1\leq j \leq d, \ i\in\mathbb{N}
\end{split}
\end{equation}
where $\mathbb{I}_Q=1$ if an event $Q$ occurs and $=0$ if not, and $M(n)\in\mathbb{N}$ satisfies
\begin{equation}\label{2.8++}
\sum_{j=1}^d E(U^{(1)}_j\mathbb{I}_{\{U^{(1)}_j>\frac{M(n)}{2}n^{-1/8}-1\}})<\frac{n^{-1/8}}{2}.
\end{equation}
Notice that
\begin{equation}\label{2.8+++}
E|U^{(i)}_j-U^{(n,i)}_j|<n^{-1/8}, \ i,n\in\mathbb{N}, \ 1\leq j\leq d.
\end{equation}
Next we describe the discrete time markets which we use to approximate the Merton model.
Let $A\in M_{d+1}(\mathbb{R})$ be an orthogonal matrix such that it last column equals to
$(\frac{1}{\sqrt{d+1}},...,\frac{1}{\sqrt{d+1}})$. Let $\Omega_{\xi}=\{1,2,...,d+1\}^\infty$ be the space of infinite sequences $\omega=(\omega_1,\omega_2,...)$; $\omega_i\in\{1,2,...,d+1\}$ with the product probability $P^{\xi}=\{\frac{1}{d+1},...,\frac{1}{d+1}\}^\infty$.
 Define a sequence of i.i.d. random vectors $\xi^{(1)},\xi^{(2)},...$ by
\begin{equation}\label{2.9}
\begin{split}
\xi^{(i)}(\omega)=\sqrt{d+1} (A_{\omega_i 1},A_{\omega_i 2}...,A_{\omega_i d}), \ i\in\mathbb{N}.
\end{split}
\end{equation}
Observe that
\begin{equation}\label{2.9+}
\begin{split}
E\xi^{(1)}=0 \  \mbox{and} \ E\xi^{(1)}_i\xi^{(1)}_j=1 \ \mbox{if} \  i=j \  \mbox{and} \ =0 \ \mbox{otherwise}.
\end{split}
\end{equation}
The probability space $(\Omega_{\xi},P^{\xi})$ was introduced in \cite{H}.
For any $n$ we extend $(\Omega_{\xi},P^{\xi})$ to a probability space $(\Omega_n,\mathcal{G}_n,P_n)$ such that it contains a three independent sequences of i.i.d. random vectors
$\{\xi^{(i)}\}_{i=1}^\infty$, $\{\rho^{n,k}\}_{k=1}^n$ and
$\{u^{n,k}\}_{k=1}^n$. The second sequence is a sequence of
Bernoulli random variables such that
$P_n\{\rho^{n,1}=1\}=1-\exp(-\lambda T/n)$ and the second sequence satisfies $u^{n,1}\sim U^{n,1}$.

For any $n\in\mathbb{N}$, $0\leq k\leq n$ and $1\leq i \leq d$ set $N^{n,k}=\sum_{m=1}^k \rho^{n,m}$ and
\begin{equation}\label{2.11}
S^{(n)}_i(t)=S_i(0)\exp(rkT/n)\prod_{m=1}^k (1+\sqrt\frac{T}{n}\sum_{j=1}^d \sigma_{ij}\xi^{(m)}_j)\frac{\prod_{j=1}^{N^{n,k}} (1+u^{n,j}_i)}{\big(1+(1-\exp(-\lambda T/n))E_n u^{n,1}_i\big)^k}
\end{equation}
where $\frac{kT}{n}\leq t< \frac{(k+1)T}{n}$ and $E_n$ denotes the expectation with respect to $P_n$.
Consider a multinomial $n$--step market which is active in the moments $0,\frac{T}{n},\frac{2T}{n},...,T$ and consists of a savings account
which given by (\ref{2.4})
and of $d$ risky stocks $S^{(n)}=(S^{(n)}_1,...,S^{(n)}_d)$ given by
(\ref{2.11}).
Next, we introduce an American option
with the payoff process
\begin{equation}\label{2.12}
Y^{(n)}(k)=F(S^{(n)}(kT/n),kT/n), \ 0\leq k\leq n.
\end{equation}
Let $\mathcal{T}_n$ be the set of stopping times with respect to the filtration $\{\sigma(S^{(n)}_0,S^{(n)}_{\frac{T}{n}},...,S^{(n)}_{\frac{kT}{n}})\}_{k=0}^n$
with values in $\{0,1,...,n\}$.
Observe that for any $n$, $P_n$ is a martingale measure for the $n$--step market. Thus
\begin{equation}\label{2.13}
V_n=\sup_{\tau\in\mathcal{T}_n} E_n(\exp(-\tau T/n)Y^{(n)}(\tau))
\end{equation}
is an arbitrage--free price of the $n$--step market.
The following theorem says that the arbitrage--free prices of the $n$--step markets converge to the arbitrage--free price of the Merton model and provides an estimates on the error terms.
\begin{thm}\label{thm2.1}
For any $\epsilon>0$ there exists a constant $C_{\epsilon}$ such that for any $n$
\begin{equation}\label{2.14}
|V-V_n|<C_{\epsilon}n^{\epsilon-\frac{1}{8}}.
\end{equation}
\end{thm}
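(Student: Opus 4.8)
The plan is to interpose, between $V$ and $V_n$, the value $\hat V_n$ of the \emph{time-discretized} American option in the continuous model: $\hat V_n:=\sup_{\tau}E(\exp(-r\tau)Y(\tau))$, the supremum being over those stopping times of the augmented filtration generated by $S$ that take values in the grid $\{kT/n:0\le k\le n\}$. I would prove $|V-\hat V_n|\le C_{\epsilon}n^{\epsilon-1/2}$ and $|\hat V_n-V_n|\le C_{\epsilon}n^{\epsilon-1/8}$ separately; the second estimate is where \cite{S} enters and carries the rate of the theorem. For the first: clearly $\hat V_n\le V$, and for the converse one takes an $\epsilon$-optimal $\tau$ for \eqref{2.8} and replaces it by the first grid point $\hat\tau\ge\tau$, again a stopping time of the filtration of $S$; by \eqref{2.7+}, $|\exp(-r\tau)Y(\tau)-\exp(-r\hat\tau)Y(\hat\tau)|$ is at most $\frac{rT}{n}Y(\tau)+L\sum_i|S_i(\hat\tau)-S_i(\tau)|+\frac{LT}{n}(1+\sum_i|S_i(\hat\tau)|)$, and after taking expectations the jump part of $S$ moves on $(\tau,\hat\tau]$ only with probability $O(1/n)$ by the strong Markov property, the Brownian part contributes $O(n^{\epsilon-1/2})$ by a maximal inequality over the $n$ cells, and $E\max_{0\le k\le n}|S_i(kT/n)|<\infty$ by Doob's inequality applied to the $P$-martingale $\exp(-rt)S_i(t)$. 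This gives $|V-\hat V_n|\le C_{\epsilon}n^{\epsilon-1/2}$.

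For $|\hat V_n-V_n|$ I would, for each $n$, build on an enlargement of $(\Omega_n,\cG_n,P_n)$ a Merton model $S$ — driven by a Brownian motion $W$, a Poisson process $N$ and jump sizes $U^{(j)}$ — living together with the $n$-step model $S^{(n)}$ and with close grid values. The Brownian motion is produced from \cite{S}: applied with the i.i.d. sequence $\{\xi^{(m)}\}$, which by \eqref{2.9+} has mean zero and covariance $I_d$, and with the Gaussian target, it yields i.i.d. Gaussian vectors whose scaled partial sums approximate $\sqrt{T/n}\sum_{m\le k}\xi^{(m)}$ uniformly in $k$, and interpolating these by independent Brownian bridges gives $W$; a Poisson process $N$ is coupled to $\rho^{n,\cdot}$ so that $\rho^{n,m}=\mathbb I\{N\text{ jumps in the }m\text{-th cell}\}$, and the jump sizes are coupled so that $u^{n,j}$ is the rounding \eqref{2.8+} of $U^{(j)}$, with fresh independent sizes assigned to the (rare) second jumps inside a cell. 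Using the independence property of \cite{S} ($Y^{(k)}$ independent of $X^{(m)}$ for $k<m$) and the causal construction of $N$ and the $U^{(j)}$, this coupling can be arranged so that the filtration $(\sigma(S^{(n)}(0),\dots,S^{(n)}(kT/n)))_{k}$ is immersed in $\cH_k:=\sigma(S(0),\dots,S(kT/n),S^{(n)}(0),\dots,S^{(n)}(kT/n))$; applying \cite{S} with the roles of the Gaussian and the $\xi$-type sequence reversed gives a second coupling in which, symmetrically, the filtration generated by $S$ is immersed in $\cH$.

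On the first coupling the Snell envelope of the $n$-step problem, written Markovianly as $\exp(-rkT/n)v_n(S^{(n)}(kT/n),k)$ with $V_n=v_n(S^{(n)}(0),0)$, is by immersion a $\cH$-supermartingale dominating $\exp(-rkT/n)F(S^{(n)}(kT/n),kT/n)$, so stopping it at the optimal (hence $\cH$-adapted) stopping time for $\hat V_n$ and using $|F(x,t)-F(y,t)|\le L\sum_i|x_i-y_i|$ gives $V_n\ge\hat V_n-L\,E\Delta_n$, where $\Delta_n:=\max_{0\le k\le n}\sum_i|S_i(kT/n)-S^{(n)}_i(kT/n)|$; the second coupling gives symmetrically $V_n\le\hat V_n+L\,E\Delta_n$. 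It then remains to show $E\Delta_n\le C_{\epsilon}n^{\epsilon-1/8}$: off an event of probability $O(1/n)$ (disposed of by Cauchy--Schwarz, using that $S_i,S^{(n)}_i$ have finite second moments by \eqref{2.5}, \eqref{2.11} and $E|U^{(1)}_j|^2<\infty$) one compares the multiplicative factors of $S$ and $S^{(n)}$ directly, the Brownian factors differing by a Taylor remainder $O(n^{-1/2})$ plus the approximation error of \cite{S}, and the jump factors being treated via the telescoping identity
\[
\prod_{l\le m}(1+a_l)-\prod_{l\le m}(1+b_l)=\sum_{l\le m}\Big(\prod_{1\le j<l}(1+b_j)\Big)(a_l-b_l)\Big(\prod_{l<j\le m}(1+a_j)\Big),
\]
which, after taking expectations, reduces the jump error to $\sum_l E|U^{(l)}_i-u^{n,l}_i|$ multiplied by expectations of products over the remaining (independent) jumps; since $E|U^{(l)}_i-u^{n,l}_i|<n^{-1/8}$ by \eqref{2.8+++} — where \eqref{2.8++} is exactly what lets one control the truncated part with only a second moment on $U$ — and the normalizer in \eqref{2.11} matches $\exp(\mu_i t)$ up to $O(n^{-1/8})$ by \eqref{2.6}, one obtains $E\Delta_n=O(n^{\epsilon-1/8})$. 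Combining the three estimates yields \eqref{2.14}.

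The step I expect to be the real obstacle is the construction of the second paragraph: making the multidimensional strong approximation of \cite{S} cooperate with the Poisson clock and with the jump sizes (which are only $L^2$) while retaining enough adaptedness — the immersion property, and this in both directions, which is why two couplings are used — for the Snell-envelope comparison to apply. The essential point is that a \emph{functional} strong approximation is needed: the one-step discrepancy between $\xi^{(m)}$ and a Gaussian vector is of order $n^{-1/2}$, so it would accumulate to $n^{1/2}$ if the two Markov chains were compared step by step rather than globally, whereas the global coupling keeps $E\Delta_n$ of order $n^{\epsilon-1/8}$.
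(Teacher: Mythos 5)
Your proposal follows essentially the same route as the paper: first discretize the stopping times in the continuous model (the paper's Lemma \ref{lem3.1}, which gets $n^{-1/4}$ by a fourth-moment bound in place of your $n^{\epsilon-1/2}$ --- either suffices), then couple the continuous and multinomial models via the theorem of \cite{S} applied \emph{in both directions}, rounding the jump sizes as in \eqref{2.8+}--\eqref{2.8+++}, matching the Poisson clock to the Bernoulli clock off an $O(1/n)$-probability event, and disposing of the bad event of the strong approximation by H\"older with exponents $p=\frac{1}{1-8\epsilon}$, $q=\frac{1}{8\epsilon}$ --- exactly the content of Lemmas \ref{lem3.2}--\ref{lem3.4} and \ref{lem4.1}. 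The only real difference is organizational: the paper interposes three intermediate processes $S^{1,n},S^{2,n},S^{3,n}$ and compares consecutive value functions over a common enlarged filtration using the dynamic-programming Lemma \ref{lem3.0}, whereas you perform one global coupling and compare Snell envelopes by an immersion/supermartingale argument; the two devices are equivalent here and both rest on the same independence property of the construction in \cite{S} (that $X^{(1)},\dots,X^{(k-1)},Y^{(k)},\dots,Y^{(n)}$ are independent for every $k$).
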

\begin{rem}\label{rem2.1}
Theorem \ref{thm2.1} can be extended to a case where we have a finite number of Poisson clocks. Namely, consider a complete probability space ($\Omega,\mathcal{F},P)$ together with a standard $d$--dimensional
continuous in time Brownian motion $W$, $m$ independent Poisson processes
$N^{(1)},...,N^{(m)}$, which are independent of $W$, and for any $1\leq k\leq m$ a sequence of i.i.d.
random vectors $\{U^{k,i}\}_{i=1}^\infty$ with values in $(-1,\infty)^d$. We assume that the sequences are independent of each other and independent of $W$ and $N^{(1)},...,N^{(m)}$. We also assume $E|U^{k,1}_j|^2<\infty$, for $j\leq d$ and $k\leq m$. The risky assets are given by
\begin{equation}\label{2.15}
\begin{split}
S_i(t)=S_i(0)\exp((r+\mu_i-{\sum_{j=1}^d \sigma^2_{ij}}/{2})t
 +\sum_{j=1}^d\sigma_{ij} W_j(t))\prod_{k=1}^m\prod_{j=1}^{N^{(k)}(t)}(1+U^{k,j}_i)
\end{split}
\end{equation}
where $\mu_i=-\lambda \sum_{k=1}^m EU^{k,1}_i$, $i\leq d$. For an analogical multinomial models (to those that we used for one Poisson process) we can prove a similar result to Theorem \ref{thm2.1}. The proof for this case can be done in a similar way to the proof of Theorem \ref{thm2.1} and by using the same ideas. For simplicity we provide the proof only for the case with one Poisson process.
\end{rem}
\begin{rem}\label{rem2.2}
By using the Cauchy-Schwarz inequality
and the Chebyshev's inequality it follows that we can set $M(n)=[cn^{1/4}]$ for some constant $c$ ($[x]$ is the integer part of $x$).
Thus in the $n$--step mutinomial model which is given by (\ref{2.11}) the number of growth rates is proportional to $n^{1/4}$.
If $U^{(1)}$ takes on a finite number of values, then we can construct the multinomial models with a fixed number of growth rates by
letting $U^{(n,i)}=U^{(i)}$ for any $i,n$. In this case the proof of Theorem \ref{thm2.1} is simpler than for the general case and does not require
Lemma \ref{lem3.2}.
\end{rem}
\section{Auxiliary lemmas}\label{sec3}\setcounter{equation}{0}
We start with a standard result, but since we could not find a direct reference its proof is given here for readers' convenience.
\begin{lem}\label{lem3.0}
Let $n\in\mathbb{N}$ and consider a probability space together with a filtration
${\{\mathcal{G}_k\}}_{k=0}^n$ and a positive $d$--dimensional adapted stochastic process $\{Q^{(k)}\}_{k=0}^n$.
For any $k$ let $Y(k)=\phi_k(Q^{(0)},...,Q^{(k)})$ where $\phi_k:\mathbb{R}^d\times...\mathbb{R}^d\rightarrow \mathbb{R}_{+}$.
Assume that $E\max_{0\leq k\leq n}Y(k)<\infty$ and define
\begin{equation}\label{3.0+}
A=esssup_{\tau\in\mathcal{M}}E(Y(\tau)|\mathcal{G}_0)
\end{equation}
where $\mathcal{M}$ is the set of stopping times with respect to the above filtration
with values in $\{0,1,...,n\}$.
Assume that the sequence ${\{(\frac{Q^{(k)}_1}{Q^{(k-1)}_1},...,\frac{Q^{(k)}_d}{Q^{(k-1)}_d})\}}_{k=1}^n$
is a sequence of i.i.d. random vectors such that for any $k$, $(\frac{Q^{(k)}_1}{Q^{(k-1)}_1},...,\frac{Q^{(k)}_d}{Q^{(k-1)}_d})$
is independent of $\mathcal{G}_{k-1}$. Let $p(\cdot)$ be the probability density function of
$(\frac{Q^{(1)}_1}{Q^{(0)}_1},...,\frac{Q^{(1)}_d}{Q^{(0)}_d})$.
Then
\begin{equation}\label{3.0++}
A=\psi_0(Q^{(0)})
\end{equation}
where $\psi_0$ is given by the following dynamical programming relations
\begin{eqnarray}\label{3.0+++}
&\psi_n=\phi_n \ \ \mbox{and} \ \mbox{for} \ 0\leq k<n, \ \psi_k(x^{(0)},x^{(1)},...,x^{(k)})=\max(\phi_k(x^{(0)},x^{(1)},\\
&...,x^{(k)}),\int_{z\in\mathbb{R}^d}\psi_{k+1}(x^{(0)},x^{(1)},...,x^{(k)},x^{k,z} )p(z_1,...,z_n) dz_1...dz_n)\nonumber
\end{eqnarray}
where $x^{k,z}=(x^{k,z}_1,...,x^{k,z}_d)\in\mathbb{R}^d$ is given by $x^{k,z}_i=x^{(k)}_iz_i$, $i\leq d$.
\end{lem}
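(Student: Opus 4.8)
The plan is to realize $A$ as the value at time $0$ of the Snell envelope of the reward sequence $\{Y(k)\}_{k=0}^{n}$, and then to prove by backward induction that this Snell envelope is precisely the deterministic function of $(Q^{(0)},\dots,Q^{(k)})$ produced by the recursion \eqref{3.0+++}. The role of the i.i.d.\ and independence assumptions on the ratios is to make the one-step conditional expectation appearing in the Snell recursion computable as the integral against $p$.

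First I would fix, for $0\le k\le n$, the collection $\mathcal{M}_k$ of stopping times with values in $\{k,\dots,n\}$ and set
\[
Z_k=\operatorname*{ess\,sup}_{\tau\in\mathcal{M}_k}E\bigl(Y(\tau)\,\big|\,\mathcal{G}_k\bigr),
\]
so that $Z_0=A$ (as $\mathcal{M}_0=\mathcal{M}$). Since $Y\ge 0$ and $E\max_{0\le j\le n}Y(j)<\infty$, one has $0\le Z_k\le E(\max_{0\le j\le n}Y(j)\,|\,\mathcal{G}_k)$, so every $Z_k$ is integrable and well defined. The first (standard) step is the backward recursion $Z_n=Y(n)$ and $Z_k=\max\bigl(Y(k),E(Z_{k+1}\,|\,\mathcal{G}_k)\bigr)$ for $0\le k<n$. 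The inequality ``$\ge$'' follows by taking $\tau\equiv k$ and, using that $\{E(Y(\tau)\,|\,\mathcal{G}_{k+1}):\tau\in\mathcal{M}_{k+1}\}$ is upward directed, by picking $\tau_m\in\mathcal{M}_{k+1}$ with $E(Y(\tau_m)\,|\,\mathcal{G}_{k+1})\uparrow Z_{k+1}$ and applying monotone convergence to $E(\,\cdot\,|\,\mathcal{G}_k)$. The inequality ``$\le$'' follows because for $\tau\in\mathcal{M}_k$ the set $\{\tau=k\}$ is in $\mathcal{G}_k$, the time $\tau\vee(k+1)$ lies in $\mathcal{M}_{k+1}$ and agrees with $\tau$ on $\{\tau\ge k+1\}$, whence $E(Y(\tau)\,|\,\mathcal{G}_k)=Y(k)\mathbb{I}_{\{\tau=k\}}+\mathbb{I}_{\{\tau\ge k+1\}}E(Y(\tau\vee(k+1))\,|\,\mathcal{G}_k)\le\max\bigl(Y(k),E(Z_{k+1}\,|\,\mathcal{G}_k)\bigr)$; taking the ess sup over $\tau$ closes the step.

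Next I would prove by backward induction on $k$ that $Z_k=\psi_k(Q^{(0)},\dots,Q^{(k)})$, with $\psi_k$ as in \eqref{3.0+++}. For $k=n$ this is immediate from $Z_n=Y(n)=\phi_n(Q^{(0)},\dots,Q^{(n)})$. Assuming the claim for $k+1$, write $R^{(k+1)}=(Q^{(k+1)}_1/Q^{(k)}_1,\dots,Q^{(k+1)}_d/Q^{(k)}_d)$, so that $Q^{(k+1)}_i=Q^{(k)}_iR^{(k+1)}_i$ and $Z_{k+1}=\psi_{k+1}\bigl(Q^{(0)},\dots,Q^{(k)},(Q^{(k)}_1R^{(k+1)}_1,\dots,Q^{(k)}_dR^{(k+1)}_d)\bigr)$. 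Since $(Q^{(0)},\dots,Q^{(k)})$ is $\mathcal{G}_k$-measurable while $R^{(k+1)}$ is independent of $\mathcal{G}_k$ with density $p$, a standard application of Fubini's theorem (the ``freezing'' lemma for conditional expectations) gives
\[
E\bigl(Z_{k+1}\,\big|\,\mathcal{G}_k\bigr)=\int_{\mathbb{R}^d}\psi_{k+1}\bigl(Q^{(0)},\dots,Q^{(k)},x^{k,z}\bigr)\,p(z_1,\dots,z_d)\,dz_1\cdots dz_d,\qquad x^{k,z}_i=Q^{(k)}_iz_i .
\]
Substituting this into the Snell recursion for $Z_k$ and recalling $Y(k)=\phi_k(Q^{(0)},\dots,Q^{(k)})$ yields exactly $Z_k=\psi_k(Q^{(0)},\dots,Q^{(k)})$, so the induction closes; evaluating at $k=0$ gives $A=Z_0=\psi_0(Q^{(0)})$.

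I expect the delicate points to be bookkeeping rather than conceptual. One has to check that each $\psi_k$ is Borel measurable, which follows inductively from \eqref{3.0+++} once the integral there is seen to be a measurable function of $(x^{(0)},\dots,x^{(k)})$ (Fubini, plus measurability of $\phi_k$), and that $\psi_{k+1}(Q^{(0)},\dots,Q^{(k+1)})=Z_{k+1}$ is integrable, so that the freezing lemma legitimately applies; these are precisely what the hypotheses $Y\ge 0$ and $E\max_{0\le j\le n}Y(j)<\infty$ are there to guarantee. The standard Snell-envelope recursion itself, though routine, is the substantive part that the lemma is recording, and so it is the step I would write out most carefully.
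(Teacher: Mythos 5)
Your proposal is correct and follows essentially the same route as the paper: the paper simply cites the standard backward dynamic programming relations for the optimal stopping value (referring to Peskir--Shiryaev) and then observes that backward induction, together with the independence of the ratio vectors from $\mathcal{G}_{k-1}$, identifies $A(k)$ with $\psi_k(Q^{(0)},\dots,Q^{(k)})$. The only difference is that you write out in full the proof of the Snell recursion (upward-directedness plus the $\{\tau=k\}$ versus $\{\tau\ge k+1\}$ decomposition) and the freezing-lemma step, both of which the paper leaves to the reference or to the reader.
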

\begin{proof}
It is well known (see \cite{PS}) that $A=A(0)$ can be calculated by the following dynamical programming relations
\begin{equation}\label{3.0++++}
\begin{split}
A(n)=Y(n) \  \ \mbox{and} \   \mbox{for} \ 0\leq k<n ,\ A(k)=\max(Y(k),E(A(k+1)|\mathcal{G}_k)).
\end{split}
\end{equation}
By using backward induction, (\ref{3.0+++}) and (\ref{3.0++++}) we obtain
that for any $0\leq k\leq n$
\begin{equation}\label{3.0+++++}
A(k)=\psi_k(Q^{(0)},Q^{(1)},...,Q^{(k)})
\end{equation}
and the result follows.
\end{proof}
Next, we derive several estimates that will be used in this Section. For any $n\in\mathbb{N}$ and $1\leq i\leq d $ denote,
\begin{eqnarray}\label{3.1+}
&S^W_i(t)=S_i(0)\exp((r+\mu_i-{\sum_{j=1}^d \sigma^2_{ij}}/{2})t
 +\sum_{j=1}^d\sigma_{ij} W_j(t)), \\
 &J^{(i)}_n=\max_{0\leq k\leq n-1}\sup_{kT/n\leq t\leq (k+1)T/n}|{S}^W_i(t)-{S}^W_i(kT/n)|,\nonumber\\
 &D^W=\sum_{k=1}^d\sup_{0\leq t\leq T}|S^W_k(t)| \ \mbox{and} \ D=\sum_{k=1}^d\sup_{0\leq t\leq T}|S_k(t)| \nonumber.
\end{eqnarray}
By using the inequality $|\exp(x)-\exp(y)|\leq |x-y|\exp(\max(x,y))$ we obtain that
for any $1\leq i\leq d$
\begin{eqnarray}\label{3.1++}
&J^{(i)}_n\leq D^W\bigg(|r+\mu_i-{\sum_{j=1}^d \sigma^2_{ij}}/{2}|T/n+\\
&\sum_{j=1}^d |\sigma_{ij}|\max_{0\leq k\leq n-1}\sup_{kT/n\leq t\leq (k+1)T/n}|{W}_j(t)-W_j(kT/n)|\bigg).\nonumber
\end{eqnarray}
Fix $1\leq j\leq d$. From the scaling property of the Brownian motion it follows
\begin{equation*}
E\bigg(\max_{0\leq k\leq n-1}\sup_{kT/n\leq t\leq (k+1)T/n}|{W}_j(t)-W_j(kT/n)|^4\bigg)\leq n E(\sup_{0\leq t\leq T/n}|{W}_j(t)|^4)\leq\frac{c_1}{n}
\end{equation*}
for some constant $c_1$. This together with (\ref{3.1++}) and the Holder inequality gives
\begin{equation}\label{3.1+++}
E\sum_{i=1}^d J^{(i)}_n\leq {c_2}n^{-1/4}
\end{equation}
for some constant $c_2$.
Let $\mathcal{T}^{1,n}\subset\mathcal{T}$ be the set of stopping times with values in
$\{0,\frac{T}{n},\frac{2T}{n},...,T\}$.
Set,
\begin{equation}\label{3.2}
V^{(1)}_n=\sup_{\tau\in\mathcal{T}^{1,n}}E(\exp(-r\tau)F(S(\tau),\tau)).
\end{equation}
\begin{lem}\label{lem3.1}
There exists a constant $C_1$ such that for any $n$
\begin{equation}\label{3.3}
0\leq V-V^{(1)}_n\leq C_1 n^{-1/4}.
\end{equation}
\end{lem}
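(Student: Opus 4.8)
The lower bound $0\le V-V^{(1)}_n$ is immediate since $\mathcal{T}^{1,n}\subseteq\mathcal{T}$. For the upper bound the plan is to fix $\epsilon>0$, pick $\tau\in\mathcal{T}$ with $E(\exp(-r\tau)F(S(\tau),\tau))>V-\epsilon$, and replace $\tau$ by its round--up to the mesh, $\tau_n:=\frac{T}{n}\lceil n\tau/T\rceil$. Since $\{\tau_n\le kT/n\}=\{\tau\le kT/n\}$ for every $k$, $\tau_n\in\mathcal{T}^{1,n}$, so $V^{(1)}_n\ge E(\exp(-r\tau_n)F(S(\tau_n),\tau_n))$ and it is enough to bound
\[
E\big(\exp(-r\tau)F(S(\tau),\tau)-\exp(-r\tau_n)F(S(\tau_n),\tau_n)\big)\le C_1n^{-1/4}
\]
by a constant not depending on the choice of $\tau$, and then let $\epsilon\downarrow0$. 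Decomposing the integrand as $\exp(-r\tau)(F(S(\tau),\tau)-F(S(\tau_n),\tau_n))+(\exp(-r\tau)-\exp(-r\tau_n))F(S(\tau_n),\tau_n)$, invoking (\ref{2.7+}) (also with $s=0$, $\tilde\upsilon=S(0)$, to get $F(S(\tau_n),\tau_n)\le c(1+D)$), and using $0\le\tau_n-\tau\le T/n$ together with $|\exp(-r\tau)-\exp(-r\tau_n)|\le rT/n$, its absolute value is at most
\[
L\sum_{i=1}^d|S_i(\tau_n)-S_i(\tau)|+c'\tfrac{T}{n}(1+D)
\]
for constants $c,c'$ depending only on $L,r,T$ and $F$. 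The moment hypothesis $E|U^{(1)}_j|^2<\infty$ and the mutual independence of $W$, $N$ and $\{U^{(i)}\}$ give $ED<\infty$, so the second term is $O(n^{-1})$ and the task reduces to proving $E\sum_{i=1}^d|S_i(\tau_n)-S_i(\tau)|\le Cn^{-1/4}$.

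Write $S_i(t)=S^W_i(t)\Pi_i(t)$ with $S^W_i$ as in (\ref{3.1+}) and $\Pi_i(t)=\prod_{j=1}^{N(t)}(1+U^{(j)}_i)$, and set $\Pi^*_i:=\sup_{0\le t\le T}\Pi_i(t)$. I would split on the event $B:=\{N(\tau_n)>N(\tau)\}$ that $N$ jumps on $(\tau,\tau_n]$. On $B^c$ one has $\Pi_i(\tau_n)=\Pi_i(\tau)$, and since $\tau$ and $\tau_n$ sit in a common mesh interval $[kT/n,(k+1)T/n]$, (\ref{3.1+}) and the triangle inequality give $|S_i(\tau_n)-S_i(\tau)|=|S^W_i(\tau_n)-S^W_i(\tau)|\Pi_i(\tau)\le 2J^{(i)}_n\Pi^*_i$. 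Dropping $\mathbb{I}_{B^c}$ and noting that $J^{(i)}_n$ is a functional of $W$ while $\Pi^*_i$ is a functional of $(N,\{U^{(j)}\})$, so the two are independent, we get $E(|S_i(\tau_n)-S_i(\tau)|\mathbb{I}_{B^c})\le 2E(J^{(i)}_n)E(\Pi^*_i)$, where $E(\Pi^*_i)<\infty$ by bounding $\Pi^*_i\le\prod_{j=1}^{N(T)}(1+|U^{(j)}_i|)$, conditioning on $N(T)$, and summing the resulting Poisson series. Together with (\ref{3.1+++}), which gives $\sum_{i=1}^dE(J^{(i)}_n)\le c_2n^{-1/4}$, this shows the $B^c$--contribution to $E\sum_i|S_i(\tau_n)-S_i(\tau)|$ is $O(n^{-1/4})$.

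On $B$ I would use the crude estimate $|S_i(\tau_n)-S_i(\tau)|\le 2\sup_{0\le t\le T}S_i(t)$ and Cauchy--Schwarz: $E(|S_i(\tau_n)-S_i(\tau)|\mathbb{I}_B)\le 2\,\|\sup_{0\le t\le T}S_i(t)\|_{L^2}\,P(B)^{1/2}$, the first factor being finite because $\sup_tS_i(t)\le(\sup_tS^W_i(t))\,\Pi^*_i$ splits into independent $L^2$ pieces (here $E|U^{(1)}_i|^2<\infty$ is used). The decisive estimate is $P(B)\le\lambda T/n$: conditioning on $\bar{\mathcal{F}}_\tau$, where $\{\bar{\mathcal{F}}_t\}$ is the (larger) filtration generated jointly by $W$, $N$ and $\{U^{(i)}\}$ --- with respect to which $\tau$ is still a stopping time and $N$ is still a rate-$\lambda$ Poisson process --- the strong Markov property of $N$ shows that $N(\tau_n)-N(\tau)$ is, conditionally, Poisson with the $\bar{\mathcal{F}}_\tau$--measurable mean $\lambda(\tau_n-\tau)$, so $P(B\mid\bar{\mathcal{F}}_\tau)=1-e^{-\lambda(\tau_n-\tau)}\le\lambda(\tau_n-\tau)\le\lambda T/n$, whence $P(B)\le\lambda T/n$. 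Hence the $B$--contribution is $O(n^{-1/2})$, and combining the two pieces gives $E\sum_{i=1}^d|S_i(\tau_n)-S_i(\tau)|\le Cn^{-1/4}$, which completes the argument.

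The one real obstacle is the jump term on $B$. Over a mesh interval $S$ has no deterministic modulus of continuity, so when a jump of $N$ falls in $(\tau,\tau_n]$ the increment $|S_i(\tau_n)-S_i(\tau)|$ is only $O(1)$ and a pathwise estimate is hopeless; what makes the argument work is that, for a \emph{fixed} stopping time $\tau$, $B$ is a \emph{single} event of probability $O(1/n)$ --- this is exactly what the strong Markov property of $N$ at $\tau$ yields --- rather than a union of $n$ events of probability $O(1/n)$ each. The remaining ingredients --- $\tau_n\in\mathcal{T}^{1,n}$, the Lipschitz bookkeeping, and the finiteness of $E(\Pi^*_i)$, $\|\sup_{0\le t\le T}S_i(t)\|_{L^2}$ and $ED$ --- are routine consequences of $E|U^{(1)}_j|^2<\infty$ and the mutual independence of $W$, $N$ and $\{U^{(i)}\}$.
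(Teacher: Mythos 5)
Your proposal is correct and follows essentially the same route as the paper: round the near-optimal stopping time up to the mesh, apply the Lipschitz condition (\ref{2.7+}), split on the event that the Poisson process jumps between $\tau$ and its round-up (the paper's event $Q$ in (\ref{3.6})), bound the no-jump part by $2J^{(i)}_n$ times the jump product and the jump part via Cauchy--Schwarz together with $P(Q)\le\lambda T/n$ from the strong Markov property. The only cosmetic difference is that you make the independence of $J^{(i)}_n$ and $\Pi^*_i$ explicit where the paper folds it into its citation of (\ref{3.1+++}) and Cauchy--Schwarz.
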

\begin{proof}
The inequality $0\leq V-V^{(1)}_n$ is obvious. Thus it remains to prove that $V-V^{(1)}_n\leq  C_1 n^{-1/4}$.
Fix $n\in\mathbb{N}$ and choose $\epsilon>0$.
There exist $\tau\in\mathcal{T}$ such that
\begin{equation}\label{3.4}
 V<\epsilon+E(\exp(-r\tau)F(S(\tau),\tau)).
\end{equation}
Define the random variable $\sigma=\min\{t\in\{0,\frac{T}{n},\frac{2T}{n},...,T\}| \ t\geq\tau\}$. Observe that
$\sigma\in\mathcal{T}^{1,n}$ and $\tau\leq\sigma\leq\tau+\frac{T}{n}$. Thus from (\ref{2.7+}) it follows
\begin{eqnarray}\label{3.5}
&V-V^{(1)}_n<\epsilon+E(\exp(-r\tau)F(S(\tau),\tau))-E(\exp(-r\sigma)F(S(\sigma),\sigma))\\
&\leq\epsilon+E|F(S(\tau),\tau)-\exp(-rT/n)F(S(\sigma),\sigma)|\nonumber\\
&\leq\epsilon+E|F(S(\tau),\tau)-F(S(\sigma),\sigma)|+\frac{rT}{n}E(F(S(\sigma),\sigma))\leq\epsilon+
\frac{LT}{n}(1+ED)\nonumber\\
&+L E(\sum_{i=1}^d |S_i(\tau)-S_i(\sigma)|)+\frac{rT}{n}(F(S(0),0)+L D+LT(1+D)).\nonumber
\end{eqnarray}
Next, set the event $Q=\{N(\sigma)>N(\tau)\}$. Notice that for any $1\leq i\leq d$
\begin{equation}\label{3.6}
|S_i(\tau)-S_i(\sigma)|\leq  \mathbb{I}_QD+ 2(\sup_{0\leq t\leq T}\prod_{j=1}^{N(t)}(1+U^{(j)}_i)) J^{(i)}_n.
\end{equation}
>From (\ref{3.1+++}), (\ref{3.5})--(\ref{3.6}) and the Cauchy–-Schwarz inequality we obtain that there exist constants $c_3,c_4$ such that
\begin{eqnarray}\label{3.7}
&V-V^{(1)}_n<\epsilon+ c_3n^{-1/4}+c_4 \sqrt{P(Q)}.
\end{eqnarray}
>From the strong Markov property of the Poisson process (with respect to the natural filtration generated by $S$)
and the inequality $\tau\leq \sigma\leq \tau+\frac{T}{n}$ we obtain
\begin{equation}\label{3.8}
P(Q)\leq P(N(T/n)>0)=1-\exp(-\lambda T/n)\leq \frac{\lambda T}{n}.
\end{equation}
The result follows by combining (\ref{3.7})--(\ref{3.8}) and letting $\epsilon\downarrow 0$.
\end{proof}
For any $n$ define ${S}^{1,n}=({S}^{1,n}_1,...,S^{1,n}_d)$ by
\begin{equation}\label{3.9}
\begin{split}
{S}^{1,n}_i(t)=S_i(0)\exp((r+\mu_i-{\sum_{j=1}^d \sigma^2_{ij}}/{2})t
 +\sum_{j=1}^d\sigma_{ij} W_j(t))\prod_{j=1}^{N(t)}(1+U^{n,j}_i), \  1\leq i\leq d.
\end{split}
\end{equation}
Let $\mathcal{T}^{2,n}$ be the set of stopping times with respect to the filtration
$\{\sigma(S(u),S^{1,n}(u)|u\leq t)\}_{t=0}^T$ with values in
$\{0,\frac{T}{n},\frac{2T}{n},...,T\}$.
Set
\begin{equation}\label{3.10}
V^{(2)}_n=\sup_{\tau\in\mathcal{T}^{2,n}}E(\exp(-r\tau)F(S^{1,n}(\tau),\tau)).
\end{equation}
\begin{lem}\label{lem3.2}
There exists a constant $C_2$ such that for any $n$
\begin{equation}\label{3.11}
|V^{(2)}_n-V^{(1)}_n|\leq {C_2} n^{-1/8}.
\end{equation}
\end{lem}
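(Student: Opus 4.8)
The plan is to reduce the comparison of the two optimal stopping values to a single estimate on how far the price process moves when the jump ratios are replaced by their discretizations, and then to prove that estimate. Set $\Delta_n:=\sum_{i=1}^d\sup_{0\le t\le T}|S^{1,n}_i(t)-S_i(t)|$. First observe that $S^{1,n}$ is adapted to the (augmented) natural filtration of $S$: from a path of the $d$--dimensional process $S$ one recovers $N$ (its jump times), the marks $U^{(j)}$ (as the jump ratios $S_i(\tau_j)/S_i(\tau_j-)$), and, after removing the jumps, the continuous martingale part $\sum_j\sigma_{ij}W_j$, from which $W$ is recovered because $\sigma$ is nonsingular; hence $S^{1,n}_i(t)=S^W_i(t)\prod_{j\le N(t)}(1+U^{n,j}_i)$ is $\sigma(S(u):u\le t)$--measurable. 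Therefore $\mathcal{T}^{2,n}=\mathcal{T}^{1,n}$, so $V^{(2)}_n$ and $V^{(1)}_n$ are suprema over the \emph{same} class of stopping times, and by (\ref{2.7+}) (applied with equal time arguments) together with $e^{-r\tau}\le 1$,
\[
|V^{(2)}_n-V^{(1)}_n|\le\sup_{\tau\in\mathcal{T}^{1,n}}E\big|e^{-r\tau}F(S^{1,n}(\tau),\tau)-e^{-r\tau}F(S(\tau),\tau)\big|\le L\,E\Delta_n .
\]
(If one prefers not to identify the filtrations, the inclusion $\mathcal{T}^{1,n}\subseteq\mathcal{T}^{2,n}$ already yields $V^{(2)}_n\ge V^{(1)}_n-L\,E\Delta_n$ at once, and the adaptedness of $S^{1,n}$ is needed only for the reverse inequality.)

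Everything thus reduces to the substantive claim $E\Delta_n\le C\,n^{-1/8}$. Here $S_i(t)-S^{1,n}_i(t)=S^W_i(t)\big(\prod_{j\le N(t)}(1+U^{(j)}_i)-\prod_{j\le N(t)}(1+U^{n,j}_i)\big)$; since $S^W$ is independent of $N$ and of the marks and $E\sup_{0\le t\le T}S^W_i(t)\le ED^W<\infty$, it suffices to bound $E\sup_{0\le t\le T}\big|\prod_{j\le N(t)}(1+U^{(j)}_i)-\prod_{j\le N(t)}(1+U^{n,j}_i)\big|$. Using the telescoping identity
\[
\prod_{j\le l}(1+a_j)-\prod_{j\le l}(1+b_j)=\sum_{j=1}^{l}(a_j-b_j)\prod_{m<j}(1+a_m)\prod_{j<m\le l}(1+b_m)
\]
and enlarging the products to run up to $N(T)$ (all factors exceed $1$), the supremum over $t$ is at most
\[
\sum_{j=1}^{N(T)}|U^{(j)}_i-U^{n,j}_i|\;\prod_{m<j}(1+|U^{n,m}_i|)\;\prod_{j<m\le N(T)}(1+|U^{(m)}_i|).
\]

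Now I take expectations. Conditioning on $\{N(T)=M\}$ and using that the $j$--th mark is independent of all the others and of $N$, the expectation of each summand factorizes; by (\ref{2.8+++}) the factor $E|U^{(j)}_i-U^{n,j}_i|<n^{-1/8}$, while $E(1+|U^{n,m}_i|)$ is bounded uniformly in $n\ge1$ and $m$ — indeed $|U^{n,m}_i|\le 1+(U^{(m)}_i)^{+}+|U^{n,m}_i-U^{(m)}_i|$, so by (\ref{2.8+++}) $E(1+|U^{n,m}_i|)\le 3+E|U^{(1)}_i|=:K_i$ — and $E(1+|U^{(m)}_i|)=1+E|U^{(1)}_i|\le K_i$. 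Hence the conditional expectation is $\le M\,n^{-1/8}K_i^{M-1}$, and summing against the Poisson$(\lambda T)$ law of $N(T)$ gives $n^{-1/8}E\big[N(T)K_i^{N(T)-1}\big]=n^{-1/8}\lambda T\,e^{\lambda T(K_i-1)}<\infty$. Multiplying by $E\sup_t S^W_i(t)$ and summing over $i\le d$ yields $E\Delta_n\le C_2 n^{-1/8}$, which together with the first paragraph proves (\ref{3.11}).

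The main obstacle is precisely the bound $E\Delta_n\le C n^{-1/8}$ under the weak moment hypothesis on the $U^{(1)}_j$: the per--jump discretization error is controlled only in $L^1$ by (\ref{2.8+++}), it is multiplied by products of other, $n$--dependent and unbounded, $(1+U)$--factors, and there is a random (Poisson) number of such terms. A naive Cauchy--Schwarz split would cost an $L^2$ control of $U^{(j)}_i-U^{n,j}_i$, for which only second moments of $U^{(1)}_j$ give no convergence rate. The resolution is to use independence of the individual marks so that the expectation of each summand factorizes into one $L^1$--small factor and several factors with uniformly bounded first moments, and then to absorb the random number of summands using finiteness of the exponential moments of a Poisson variable. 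The remaining ingredients — the reduction of the value comparison, the adaptedness of $S^{1,n}$, and the Lipschitz bound on $F$ — are routine.
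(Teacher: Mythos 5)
Your quantitative estimate is correct and is essentially the paper's own argument in a cleaner form: the paper's chain (\ref{3.15})--(\ref{3.17}) also factors out $E(D^W)$ by independence, conditions on $N(T)$, expands the difference of products, uses independence of the marks to reduce everything to the single $L^1$ bound (\ref{2.8+++}), and sums the resulting Poisson series; your telescoping identity replaces the paper's inclusion--exclusion expansion $\sum_j\binom{m}{j}(E|A^{n,1}_i|)^j(1+EU^{(1)}_i)^{m-j}$ and arrives at the same $\lambda T e^{cT}n^{-1/8}$ bound. Where you genuinely diverge is the reduction step. The paper does not claim $\mathcal{T}^{1,n}=\mathcal{T}^{2,n}$; it invokes Lemma \ref{lem3.0} to conclude that $V^{(1)}_n=\sup_{\tau\in\mathcal{T}^{2,n}}E(\exp(-r\tau)F(S(\tau),\tau))$, i.e.\ that enlarging the filtration does not change the optimal stopping value because the multiplicative increments of $S$ over the grid remain i.i.d.\ and independent of the enlarged past, after which both values are suprema over $\mathcal{T}^{2,n}$ and the Lipschitz bound applies. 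Your alternative --- recovering $N$, the marks and $W$ pathwise from $S$ so that $S^{1,n}$ is adapted to the filtration of $S$ --- is more elementary when it works, but it has an edge case the paper's route avoids: nothing in the standing assumptions excludes $P(U^{(1)}=0)>0$, and a jump whose mark is the zero vector is invisible in the path of $S$ while its discretized mark $U^{n,j}$ is in general nonzero, so $S^{1,n}$ then fails to be adapted to $\sigma(S(u):u\le t)$ and the inclusion $\mathcal{T}^{2,n}\subseteq\mathcal{T}^{1,n}$ breaks for exactly the direction you need it. Either add the hypothesis $P(U^{(1)}=0)=0$ (harmless for applications) or replace that step by the dynamic-programming invariance of Lemma \ref{lem3.0}; with that repair your proof is complete.
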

\begin{proof}
>From Lemma \ref{lem3.0} it follows that $V^{(1)}_n=\sup_{\tau\in\mathcal{T}^{2,n}}E(\exp(-r\tau)F(S(\tau),\tau))$.
This together with (\ref{2.7+}) gives
\begin{eqnarray}\label{3.15}
&|V^{(2)}_n-V^{(n)}_1|\leq
L E (\sup_{0\leq t\leq T}\sum_{i=1}^d|{S}^{1,n}_i(t)-S_i(t)|)\leq L\times\\
&E (D^W(\sum_{i=1}^d\sup_{0\leq t\leq T}|\prod_{j=1}^{N(t)}(1+U^{(j)}_i)-\prod_{j=1}^{N(t)}(1+U^{n,j}_i)|))=\nonumber\\
&L E(D^W) E(\sum_{i=1}^d\sup_{0\leq t\leq T}|\prod_{j=1}^{N(t)}(1+U^{(j)}_i)-\prod_{j=1}^{N(t)}(1+U^{n,j}_i)|).\nonumber
\end{eqnarray}
Next, we estimate the last term from (\ref{3.15}). Fix $1\leq i\leq d$. For any $n,j\in\mathbb{N}$ set
$A^{n,j}_i=U^{n,j}_i-U^{(j)}_i$. The sequences $U^{n,j},U^{(j)}$ are independent of the Poisson process $N$. Thus
\begin{eqnarray}\label{3.16}
&E(\sup_{0\leq t\leq T}|\prod_{j=1}^{N(t)}(1+U^{(j)}_i)-\prod_{j=1}^{N(t)}(1+U^{n,j}_i)|)=\\
&\sum_{k=1}^\infty P(N(T)=k)E(\max_{1\leq m\leq k}|\prod_{j=1}^m(1+U^{(j)}_i)-\nonumber\\
&\prod_{j=1}^m(1+U^{(j)}_i+A^{n,j}_i)|)\leq\sum_{k=1}^\infty P(N(T)=k)\times\nonumber\\
&E\bigg(\max_{1\leq m\leq k}\sum_{j=1}^m\sum_{1\leq q_1<q_2<...q_j\leq m}\prod_{s=1}^j |A^{n,q_s}_i|\times\nonumber\\
&\prod_{1\leq s\leq m, s\notin\{q_1,...,q_j\}}(1+U^{(s)}_i)\bigg)\leq\sum_{k=1}^\infty P(N(T)=k)\times\nonumber\\
&E\bigg(\sum_{j=1}^m\sum_{1\leq q_1<q_2<...q_j\leq m}
\prod_{s=1}^j |A^{n,q_s}_i|
\prod_{1\leq s\leq m, s\notin\{q_1,...,q_j\}}(1+U^{(s)}_i)\bigg)=\nonumber\\
&\sum_{k=1}^\infty P(N(T)=k)\sum_{m=1}^k\sum_{j=1}^m
\binom{m}{j}(E|A^{n,1}_i|)^j(1+EU^{(1)}_i)^{m-j}=
\nonumber
\end{eqnarray}
from (\ref{2.8+++}) and the inequality $|x^m-y^m|\leq m|x-y|(\max(x,y))^{m-1}$ (for $x,y\geq 0$, and $m\in\mathbb{N}$) we obtain
\begin{eqnarray}\label{3.17}
&=\sum_{k=1}^\infty P(N(T)=k)\sum_{m=1}^k((1+EU^{(1)}_i+E|A^{n,1}_i|)^m-\\
&(1+EU^{(1)}_i)^m)\leq n^{-1/8} \sum_{k=1}^\infty P(N(T)=k)\sum_{m=1}^k m (2+EU^{(1)}_i)^{m-1}\nonumber\\
&=n^{-1/8}\sum_{m=1}^\infty m P(N(T)\geq m)(2+EU^{(1)}_i)^{m-1}\leq \nonumber\\
&n^{-1/8}\sum_{m=1}^\infty m \frac{(\lambda T)^m}{m!}(2+EU^{(1)}_i)^{m-1}={\lambda T\exp(\lambda T (2+EU^{(1)}_i))}n^{-1/8}.
\nonumber
\end{eqnarray}
By combining (\ref{3.15})--(\ref{3.17}) we obtain (\ref{3.11}).
\end{proof}
For any $n\in\mathbb{N}$, $0\leq k\leq n$ and $1\leq i\leq d$ set
\begin{eqnarray}\label{3.18}
&g^{(i)}_n=\max_{0\leq m\leq n} |(1+(1-\exp(-\lambda T/n))E U^{n,1}_i\big)^{-m}-\exp(\mu_i m T/n)|,\\
&Z^{n,k}=\sum_{j=1}^k\mathbb{I}_{\{N(jT/n)-N((j-1)T/n)\geq 1\}} \ \mbox{and}\nonumber\\
&{S}^{2,n}_i(t)=S_i(0)\exp((r-{\sum_{j=1}^d \sigma^2_{ij}}/{2})kT/n
 +\sum_{j=1}^d\sigma_{ij} W_j(kT/n))\times\nonumber\\
 &\frac{\prod_{j=1}^{Z^{n,k}} (1+U^{n,j}_i)}{\big(1+(1-\exp(-\lambda T/n))E U^{n,1}_i\big)^k}, \ \mbox{for} \ \frac{kT}{n}\leq t<\frac{(k+1)T}{n}.\nonumber
\end{eqnarray}
Notice that $Z^{n,k}\sim N^{n,k}$ (the last term was defined before ({\ref{2.11})).
>From (\ref{2.6}) and
(\ref{2.8+++}) it follows that there exists a constant $c_5$ such that
\begin{equation}\label{3.19}
\begin{split}
g^{(i)}_n \leq \frac{c_5}{n}, \ \ \forall  1\leq i\leq d.
\end{split}
\end{equation}
Denote ${S}^{2,n}=({S}^{2,n}_1,...,S^{2,n}_d)$. Let $\mathcal{T}^{3,n}$ be the set of stopping times with respect to the filtration
$\{\sigma(S(u),S^{1,n}(u),S^{2,n}(u)|u\leq t)\}_{t=0}^T$ with values in
$\{0,\frac{T}{n},\frac{2T}{n},...,T\}$.
Define
\begin{equation}\label{3.20}
V^{(3)}_n=\sup_{\tau\in\mathcal{T}^{3,n}}E(\exp(-r\tau)F(S^{2,n}(\tau),\tau)).
\end{equation}
\begin{lem}\label{lem3.3}
There exists a constant $C_3$ such that for any $n$
\begin{equation}\label{3.21}
|V^{(3)}_n-V^{(2)}_n|\leq C_3 n^{-1/8}.
\end{equation}
\end{lem}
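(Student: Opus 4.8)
The plan is to follow the pattern of the proof of Lemma~\ref{lem3.2}. The first step is to apply Lemma~\ref{lem3.0} with $Q^{(k)}=S^{1,n}(kT/n)$ and the filtration generated by $(S,S^{1,n},S^{2,n})$, which yields
\[
V^{(2)}_n=\sup_{\tau\in\mathcal{T}^{3,n}}E\big(\exp(-r\tau)F(S^{1,n}(\tau),\tau)\big);
\]
that is, enlarging the filtration from $\sigma(S,S^{1,n})$ to $\sigma(S,S^{1,n},S^{2,n})$ leaves the value unchanged. The hypothesis of Lemma~\ref{lem3.0} to be checked is that the vectors $\big(S^{1,n}_i(kT/n)/S^{1,n}_i((k-1)T/n)\big)_{1\le i\le d}$ are i.i.d.\ in $k$ and the $k$-th one is independent of $\sigma(S(u),S^{1,n}(u),S^{2,n}(u)\mid u\le(k-1)T/n)$. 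This holds because that vector is a fixed function of $W(kT/n)-W((k-1)T/n)$, $N(kT/n)-N((k-1)T/n)$ and the $U^{(j)}$ with $N((k-1)T/n)<j\le N(kT/n)$, and the jump indices occurring here are disjoint from those used to build the three processes up to time $(k-1)T/n$, so by the i.i.d.\ property of $\{U^{(j)}\}$ and its independence of $(W,N)$ the relevant randomness is fresh --- exactly as in the corresponding step of Lemma~\ref{lem3.2}. Since $\tau$ is common to both payoffs, the time-increment term in (\ref{2.7+}) drops out and $\exp(-r\tau)\le1$, so
\[
|V^{(3)}_n-V^{(2)}_n|\le L\,E\max_{0\le k\le n}\sum_{i=1}^d\big|S^{2,n}_i(kT/n)-S^{1,n}_i(kT/n)\big|.
\]

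Next I would write, for $0\le k\le n$ and $1\le i\le d$ (with $Z^{n,k}$ and $g^{(i)}_n$ as in (\ref{3.18})),
\[
S^{1,n}_i(kT/n)=R^{k}_i\,e^{\mu_ikT/n}\prod_{j=1}^{N(kT/n)}(1+U^{n,j}_i),
\]
\[
S^{2,n}_i(kT/n)=R^{k}_i\,\big(1+(1-e^{-\lambda T/n})EU^{n,1}_i\big)^{-k}\prod_{j=1}^{Z^{n,k}}(1+U^{n,j}_i),
\]
where $R^{k}_i=S_i(0)\exp\big((r-\sum_j\sigma^2_{ij}/2)kT/n+\sum_j\sigma_{ij}W_j(kT/n)\big)\le e^{|\mu_i|T}D^W$ and $R^{k}_i$ is independent of $(N,\{U^{(j)}\})$. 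Adding and subtracting $R^{k}_i\big(1+(1-e^{-\lambda T/n})EU^{n,1}_i\big)^{-k}\prod_{j=1}^{N(kT/n)}(1+U^{n,j}_i)$ splits $|S^{2,n}_i(kT/n)-S^{1,n}_i(kT/n)|$ into a \emph{drift-replacement} part, bounded by $R^{k}_i\,g^{(i)}_n\prod_{j=1}^{N(kT/n)}(1+U^{n,j}_i)$, and a \emph{jump-clustering} part, bounded by $c\,R^{k}_i\big|\prod_{j=1}^{N(kT/n)}(1+U^{n,j}_i)-\prod_{j=1}^{Z^{n,k}}(1+U^{n,j}_i)\big|$, where $c$ is a bound, uniform in $n$ and $k$, for $\big(1+(1-e^{-\lambda T/n})EU^{n,1}_i\big)^{-k}$ obtained from (\ref{3.19}) and the boundedness of $\mu_i$.

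Each product appearing above is at most $P^{(i)}_n:=\prod_{j=1}^{N(T)}(1+(U^{n,j}_i)^+)$. Because $\{U^{(j)}\}$ is i.i.d.\ and independent of $N$, $EP^{(i)}_n=\exp(\lambda T\,E(U^{n,1}_i)^+)$ and $E(P^{(i)}_n)^2=\exp(\lambda T(E(1+(U^{n,1}_i)^+)^2-1))$, both bounded uniformly in $n$: indeed (\ref{2.8+}) changes each jump by at most $\tfrac12 n^{-1/8}$ on $\{U^{(1)}_i\le\tfrac{M(n)}{2}n^{-1/8}-1\}$ and sets it to $0$ on the complementary tail, so $E(U^{n,1}_i)^+$ and $E(U^{n,1}_i)^2$ stay bounded (here $E|U^{(1)}_i|^2<\infty$ is used) despite $M(n)\sim n^{1/4}$. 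Since $R^{k}_i\le e^{|\mu_i|T}D^W$ with $ED^W<\infty$ and $g^{(i)}_n\le c_5/n$ by (\ref{3.19}), the drift-replacement part contributes $O(1/n)$. For the clustering part, let $B_n$ be the event that some mesh interval $[(j-1)T/n,jT/n]$, $1\le j\le n$, carries at least two Poisson jumps; on $B_n^c$ one has $Z^{n,k}=N(kT/n)$ for every $k$, so that part vanishes, while on $B_n$ it is at most $2c\,R^{k}_iP^{(i)}_n$. Hence by Cauchy--Schwarz and the independence $D^W\perp(N,\{U^{(j)}\})$ its contribution is at most $\mathrm{const}\cdot\sqrt{P(B_n)}$, and $P(B_n)\le n\,P(N(T/n)\ge2)\le(\lambda T)^2/(2n)$, giving $O(n^{-1/2})$. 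Adding the two, $|V^{(3)}_n-V^{(2)}_n|=O(1/n)+O(n^{-1/2})\le C_3 n^{-1/8}$.

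The bulk of the work --- and the step needing the most care --- is the jump-clustering term: recognizing that $S^{2,n}$ and $S^{1,n}$ agree at mesh points off the rare double-jump event $B_n$ with $P(B_n)=O(1/n)$, and then controlling the resulting expectation uniformly in $n$, which forces the check that the truncated-and-rounded jumps $U^{n,j}_i$ have uniformly bounded second moments even though their range grows like $n^{1/8}$. The Lemma~\ref{lem3.0} filtration-enlargement step and the $O(1/n)$ drift-replacement estimate via (\ref{3.19}) are routine and entirely parallel to Lemma~\ref{lem3.2}.
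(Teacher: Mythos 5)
Your proposal is correct and follows essentially the same route as the paper: the same reduction via Lemma \ref{lem3.0} to a comparison of $S^{1,n}$ and $S^{2,n}$ at mesh points, the same two-term decomposition (your ``drift-replacement'' part is the paper's $g^{(i)}_n$ term from (\ref{3.18})--(\ref{3.19}), and your event $B_n$ is exactly the paper's $\cup_{k=1}^n Q_k$ in (\ref{3.22})), and the same Cauchy--Schwarz step giving $O(n^{-1})+O(n^{-1/2})\le C_3 n^{-1/8}$. You simply spell out in more detail the moment bounds and independence checks that the paper dismisses as ``easy to verify.''
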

\begin{proof}
Fix $n\in\mathbb{N}$. Introduce the events $Q_k={\{N(kT/n)-N((k-1)T/n)> 1\}}$, $k\leq n$.
Notice that for any $1\leq i\leq d$
\begin{eqnarray}\label{3.22}
&\max_{0\leq k \leq n } |{S}^{2,n}_i(kT/n)-{S}^{1,n}_i(kT/n)| \leq \mathbb{I}_{\{\cup_{k=1}^n Q_k\}}\sup_{0\leq t \leq T }({S}^{1,n}_i(t)+\\
&{S}^{2,n}_i(t))+
G^{(i)}_n\exp(|\mu_iT|)\sup_{0\leq t \leq T} {S}^{1,n}_i(t)\nonumber.
\end{eqnarray}
>From Lemma \ref{lem3.0} it follows that $V^{(2)}_n=\sup_{\tau\in\mathcal{T}^{3,n}}E(\exp(-r\tau)F(S(\tau),\tau))$.
It is easy to verify that the terms  $E(\sup_{0\leq t \leq T } ({S}^{1,n}_i(t)+{S}^{2,n}_i(t))^2)$ are uniformly (with respect to $n$) bounded.
Thus (\ref{3.22}) and the Cauchy–-Schwarz inequality gives that there exist constants $c_6,c_7$ such that
\begin{eqnarray}\label{3.23}
&|V^{(3)}_n-V^{(2)}_n|\leq L E\sum_{i=1}^d \max_{0\leq k \leq n } |{S}^{2,n}_i(kT/n)-{S}^{1,n}_i(kT/n)|\leq \\
&c_6 n^{-1/8}+ c_7 \sqrt{P(\cup_{k=1}^n Q_k)}\leq c_6 n^{-1/8}+c_7 \sqrt{\frac{\lambda^2 T^2}{n}}\nonumber
\end{eqnarray}
and the result follows.
\end{proof}
Next, we derive estimates on a discrete probability spaces. For any $n$ and $0\leq k\leq n$ define
\begin{eqnarray}\label{3.24}
&S^{3,n}_i(t)=S_i(0)\exp(krT/n+\sum_{m=1}^k\sum_{j=1}^d(\sqrt{T/n}\sigma_{ij}\xi^{(m)}_j-\sigma^2_{ij}T/(2n)))\\
&\times\frac{\prod_{j=1}^{N^{n,k}} (1+u^{n,j}_i)}{\big(1+(1-\exp(-\lambda T/n))E_n u^{n,1}_i\big)^k}, \ \ \mbox{for} \ \  kT/n\leq t<(k+1)T/n \nonumber\\
&\mbox{and} \ V^{(4)}_n=\sup_{\tau\in\mathcal{T}_n} E_n(\exp(-\tau T/n)F(\tau,S^{3,n}(\tau)), \ \mbox{where} \
{S}^{3,n}=({S}^{3,n}_1,...,S^{3,n}_d) \nonumber.
\end{eqnarray}
\begin{lem}\label{lem3.4}
There exists a constant $C_4$ such that for any $n$
\begin{equation}\label{3.25}
|V^{(4)}_n-V_n|\leq C_4 n^{-1/2}.
\end{equation}
\end{lem}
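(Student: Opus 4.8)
The plan is to exploit the fact that $V_n$ and $V^{(4)}_n$ are suprema of $E_n(\exp(-\tau T/n)F(\cdot\,,\tau T/n))$ over the \emph{same} stopping class $\cT_n$, the payoff being evaluated along $S^{(n)}$ in the first case and along $S^{3,n}$ in the second; hence, unlike in Lemmas \ref{lem3.1}--\ref{lem3.3}, no appeal to Lemma \ref{lem3.0} is needed here. Using $\exp(-\tau T/n)\le 1$, the Lipschitz bound (\ref{2.7+}) with equal time arguments, and $|\sup a-\sup b|\le\sup|a-b|$, I would first reduce (\ref{3.25}) to showing
\begin{equation*}
|V^{(4)}_n-V_n|\ \le\ L\,E_n\sum_{i=1}^d\max_{0\le k\le n}\big|S^{(n)}_i(kT/n)-S^{3,n}_i(kT/n)\big|\ \le\ C_4\,n^{-1/2}.
\end{equation*}
Comparing (\ref{2.11}) with (\ref{3.24}), $S^{(n)}_i$ and $S^{3,n}_i$ carry the \emph{identical} discount factor $\exp(rkT/n)$ and the \emph{identical} jump factor $\prod_{j=1}^{N^{n,k}}(1+u^{n,j}_i)\big(1+(1-\exp(-\la T/n))E_n u^{n,1}_i\big)^{-k}$; the only discrepancy is between $\prod_{m=1}^k(1+\eta_m)$ and $\exp\big(\sum_{m=1}^k(\eta_m-c/2)\big)$, where $\eta_m:=\sqrt{T/n}\sum_j\sigma_{ij}\xi^{(m)}_j$ and $c:=\tfrac{T}{n}\sum_j\sigma_{ij}^2$. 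Setting $\Delta_k:=\sum_{m=1}^k\big(\log(1+\eta_m)-\eta_m+c/2\big)$ we obtain $S^{(n)}_i(kT/n)-S^{3,n}_i(kT/n)=S^{3,n}_i(kT/n)(e^{\Delta_k}-1)$.

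Since the $\xi^{(m)}$ are bounded, for $n$ large one has $|\eta_m|\le\tfrac12$, whence $|\log(1+\eta_m)-\eta_m|\le\eta_m^2$ and so $\max_{k\le n}|\Delta_k|\le B_i$ for some constant $B_i$ independent of $n$ and $k$; thus $|e^{\Delta_k}-1|\le e^{B_i}|\Delta_k|$ and
\begin{equation*}
\max_{0\le k\le n}\big|S^{(n)}_i(kT/n)-S^{3,n}_i(kT/n)\big|\ \le\ e^{B_i}\Big(\max_{0\le k\le n}|\Delta_k|\Big)\Big(\max_{0\le k\le n}S^{3,n}_i(kT/n)\Big).
\end{equation*}
By the Cauchy--Schwarz inequality it then suffices to establish two bounds: (i) $\sup_n E_n\big(\max_k S^{3,n}_i(kT/n)\big)^2<\infty$, which is routine (Doob's inequality for the nonnegative $P_n$--martingale $\exp(-rkT/n)S^{3,n}_i(kT/n)$, exactly as in the proof of Lemma \ref{lem3.3}, the only point to watch being a uniform-in-$n$ bound on $E_n(u^{n,1}_i)^2$, which follows from (\ref{2.8+++}), the Chebyshev argument of Remark \ref{rem2.2} and $E|U^{(1)}_i|^2<\infty$); and (ii) $E_n\big(\max_{k\le n}|\Delta_k|\big)^2\le c\,n^{-1}$ for a constant $c$. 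Together with the finitely many small $n$ (for which (\ref{3.25}) holds trivially after enlarging $C_4$, since $V_n,V^{(4)}_n$ are finite), these give $E_n\max_k|S^{(n)}_i(kT/n)-S^{3,n}_i(kT/n)|=O(n^{-1/2})$ and hence the lemma.

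Step (ii) is the heart of the matter, and the step I expect to be the main obstacle. The idea is to Taylor expand one order further: from $|\log(1+x)-x+x^2/2|\le|x|^3$ for $|x|\le\tfrac12$ I write $\Delta_k=M_k+R_k$ with $M_k:=\tfrac12\sum_{m=1}^k(c-\eta_m^2)$ and $|R_k|\le\sum_{m=1}^n|\eta_m|^3=O(n^{-1/2})$ pointwise. The crucial point is that (\ref{2.9+}) forces $E_n\eta_m^2=\tfrac{T}{n}\sum_{j,l}\sigma_{ij}\sigma_{il}E_n\xi^{(m)}_j\xi^{(m)}_l=\tfrac{T}{n}\sum_j\sigma_{ij}^2=c$ --- i.e.\ the drift correction $\sum_j\sigma_{ij}^2T/(2n)$ built into $S^{3,n}$ is precisely $\tfrac12E_n\eta_m^2$ --- so that, the $\xi^{(m)}$ being i.i.d., $\{M_k\}_{k=0}^n$ is a martingale with mean zero and increments bounded by $O(1/n)$. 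Doob's $L^2$ maximal inequality then yields $E_n\max_{k\le n}M_k^2\le 4E_nM_n^2=4n\,\mathrm{Var}_n(M_1)=O(1/n)$, and combining with the pointwise bound on $R_k$ gives (ii). In other words, the per-step discrepancies between $\log S^{(n)}$ and $\log S^{3,n}$, though of size $O(1/n)$, are centred, so over the $n$ steps they accumulate only to $O(n^{-1/2})$ rather than $O(1)$; everything else is standard.
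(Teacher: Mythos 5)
Your proposal is correct and follows essentially the same route as the paper: your $M_k$ and $R_k$ are exactly the paper's $\sum_{m\le k}C^{m,n}_i$ and $\sum_{m\le k}(B^{m,n}_i-C^{m,n}_i)$ from (\ref{3.26})--(\ref{3.27}), the centring of the increments via (\ref{2.9+}) plus the Doob--Kolmogorov maximal inequality is (\ref{3.28}), and the final Cauchy--Schwarz step against the uniformly $L^2$-bounded maxima of the price processes is identical. The only cosmetic difference is that you justify the uniform bound in step (i) via Doob's inequality for $\exp(-rkT/n)S^{3,n}_i(kT/n)$, which is only a martingale up to a per-step factor $1+O(n^{-3/2})$ (the $\xi^{(m)}$ are not Gaussian), a harmless correction since the paper leaves this verification to the reader anyway.
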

\begin{proof}
For any $n\in\mathbb{N}$, $0\leq m\leq n$ and $1\leq i\leq d$ set,
\begin{eqnarray}\label{3.26}
&B^{m,n}_i=\ln(1+\sum_{j=1}^d \sqrt{T/n}
 \sigma_{ij}\xi^{(m)}_j)-\sum_{j=1}^d(\sqrt\frac{T}{n}\sigma_{ij}\xi^{(m)}_j-\sigma^2_{ij}T/(2n)),\\
&C^{m,n}_i=\frac{T}{2n}(\sum_{j=1}^d \sigma^2_{ij}-\sum_{j=1}^d \sum_{k=1}^d \sigma_{ij}\sigma_{ik}\xi^{(m)}_j\xi^{(m)}_k). \nonumber
\end{eqnarray}
Observe that for any $1\leq j\leq d$, $|\xi^{(1)}_j|\leq\sqrt{d+1}$ a.s. Thus
by using the Taylor series of $\ln(1+x)$ we obtain
\begin{equation}\label{3.27}
|B^{m,n}_i-C^{m,n}_i|\leq c_8 n^{-3/2}
\end{equation}
for some constant $c_8$.
>From (\ref{2.9+}) it follows that for any $i,n$ the random variables ${\{C^{m,n}_i\}}_{m=1}^n$ are i.i.d.
with mean $0$. This together with the Doob-Kolmogorov inequality (see \cite{LS}) gives
\begin{equation}\label{3.28}
\begin{split}
E_n(\max_{1\leq m\leq n}|\sum_{k=1}^m C^{k,n}_i|^2)\leq 4 E_n(|\sum_{k=1}^n C^{k,n}_i|^2)\leq 4n E_n(|C^{1,n}_i|^2)\leq \frac{c_{9}}{n}
\end{split}
\end{equation}
for some constant $c_{9}$.
It is easy to verify that the terms $E_n(\sum_{i=1}^d \sup_{0\leq t \leq T}(S^{3,n}_i(t)+S^{(n)}_i(t))^2)$, $n\in\mathbb{N}$ are
uniformly bounded. From (\ref{3.27})--(\ref{3.28}) and the Cauchy–-Schwarz inequality we obtain that there exists a constant $C_4$ such that
\begin{eqnarray*}
&|V^{(4)}_n-V_n|\leq LE_n(\sum_{i=1}^d \sup_{0\leq t\leq T}|S^{3,n}_i(t)-S^{(n)}_i(t)|)\leq \\
&L E_n(\sum_{i=1}^d \max_{1\leq m\leq n}|\sum_{k=1}^m B^{m,n}_i|\sup_{0\leq t \leq T}(S^{3,n}_i(t)+S^{(n)}_i(t)))\leq\\
&L E_n(\sum_{i=1}^d (c_8 n^{-1/2}+\max_{1\leq m\leq n}|\sum_{k=1}^m C^{m,n}_i|)\sup_{0\leq t \leq T}(S^{3,n}_i(t)+S^{(n)}_i(t)))\\
&\leq C_4 n^{-1/2}.
\end{eqnarray*}
\end{proof}
\section{Proof of main results}\label{sec4}\setcounter{equation}{0}
The following result which we state without proof was established in \cite{S} under more general assumptions
(see Theorem 2.1, Corollary 3.1 and Lemma 5.1 there). This result is the main tool that we use in order to complete
the proof of Theorem \ref{thm2.1}.
\begin{thm}\label{thm4.1}
Consider a probability space together with a sequence $X^{(1)},X^{(2)},...$ of i.i.d. $d$--dimensional random vectors and a $d$--dimensional random vector $Y$ . Assume that
$E\sum_{i=1}^ d (|X^{(1)}_i|^3+|Y_i|^3)<\infty$, $EX^{(1)}=EY$ and
\begin{equation}\label{4.1}
\begin{split}
E(X^{(1)}_iX^{(1)}_j)=E(Y_i Y_j) \ \ \forall i,j\in\{1,...,d\}.
\end{split}
\end{equation}
For any $z>0$ it is possible to extend our probability space to $(\tilde{\Omega},\tilde{F},\tilde{P})$ which contain a sequence of i.i.d. random vectors $Y^{(1)},Y^{(2)},...$ such that
$Y^{(1)}\sim Y$,
and for any $n\in\mathbb{N}$
\begin{equation}\label{4.2}
\tilde{P}(\max_{1\leq k\leq n}\sum_{i=1}^d|\sum_{m=1}^k X^{(m)}_i-Y^{(m)}_i|>z)\leq  \frac{Cn}{z^3}\sum_{i=1}^d \tilde{E}(|X^{(1)}_i|^3+|Y_i|^3)
\end{equation}
for some constant $C$ which independent of $X^{(1)},X^{(2)},...$, $Y$ and $z$ ($\tilde{E}$ denotes the expectation with respect to $\tilde{P}$).
Furthermore, for any $k>1$ the random vectors $X^{(1)},...$\\
$,X^{(k-1)},Y^{(k)},Y^{(k+1)},...$ are independent.
\end{thm}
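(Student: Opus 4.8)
The plan is to reduce the simultaneous coupling of the $X$- and $Y$-sequences to a single strong approximation of \emph{each} sequence by one common Gaussian reference walk, to combine the two approximations by the triangle inequality, and to read off the independence from an increasing-index (``forward'') construction. The first observation is that the two hypotheses $EX^{(1)}=EY$ and $E(X^{(1)}_iX^{(1)}_j)=E(Y_iY_j)$ say precisely that $X^{(1)}$ and $Y$ share one mean vector $\mu$ and one covariance matrix $\Sig$. This common first- and second-moment data is exactly what is needed to select a single Gaussian intermediary, and it is the only place where the matching-moments assumption is used. So I would fix, on an extension of the given space, an i.i.d. Gaussian sequence $G^{(1)},G^{(2)},\dots$ with $EG^{(1)}=\mu$ and covariance $\Sig$, and approximate both partial-sum sequences by the common walk $\sum_{m\le k}G^{(m)}$.

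The analytic core is the single-sequence estimate: construct $\{X^{(m)}\}$ and $\{G^{(m)}\}$ on a common space so that for every $n$ and $z>0$,
\[
\tilde P\Big(\max_{1\le k\le n}\sum_{i=1}^d\big|\textstyle\sum_{m=1}^k (X^{(m)}_i-G^{(m)}_i)\big|>z\Big)\le \frac{Cn}{z^{3}}\sum_{i=1}^d \tilde E|X^{(1)}_i|^{3},
\]
and likewise couple $\{Y^{(m)}\}$ to the \emph{same} $\{G^{(m)}\}$ with $E|Y_i|^3$ on the right. I would prove such an estimate by the dyadic Koml\'os--Major--Tusn\'ady scheme adapted to vectors: group indices into dyadic blocks, and at each scale couple a block sum of the $X$'s to a Gaussian of matching conditional mean and covariance via a quantile/conditional transform, controlling each single block-coupling error by its third absolute moment. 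Passing from a fixed $k$ to the maximum over $k$ is done coordinatewise by Doob's maximal inequality applied to the martingale $\sum_{m\le k}(X^{(m)}-G^{(m)})$. Once both approximations are in hand, writing $\sum_{m\le k}(X^{(m)}-Y^{(m)})=\sum_{m\le k}(X^{(m)}-G^{(m)})-\sum_{m\le k}(Y^{(m)}-G^{(m)})$, the triangle inequality and a union bound at level $z/2$ yield exactly (\ref{4.2}) with the two third-moment terms combined.

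For the independence assertion I would arrange the coupling to be index-monotone: the randomness feeding $Y^{(k)},Y^{(k+1)},\dots$ is disjoint from the randomness determining $X^{(1)},\dots,X^{(k-1)}$, so that the block of indices below $k$ is independent of the block from $k$ onward. Since the $X$'s are i.i.d. and the $Y$'s are i.i.d., this forces mutual independence of the whole family $X^{(1)},\dots,X^{(k-1)},Y^{(k)},Y^{(k+1)},\dots$, which is the stated staircase structure and the feature that makes the theorem usable for the jump part of the market.

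The hard point, and the reason this is not routine, is the \emph{sharpness} of the bound: the error tail must be linear in $n$, i.e. $Cn/z^3$, rather than the $n^{3/2}/z^3$ that any product (index-by-index) coupling produces, since for such a coupling $X^{(m)}-G^{(m)}$ has strictly positive variance and the partial-sum difference is of order $n^{1/2}$. Obtaining the optimal order requires the genuine multiscale cancellation of the dyadic construction, and doing so \emph{while} preserving the index-monotone measurability needed for the staircase independence is the main obstacle. This is precisely the content assembled in \cite{S} (Theorem 2.1, Corollary 3.1 and Lemma 5.1 there), which I would invoke to supply the sharp single-sequence engine; the reduction, the triangle-inequality combination and the extraction of the independence are the parts I would carry out directly.
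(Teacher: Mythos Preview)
The paper does not prove this theorem: it is imported wholesale from \cite{S} (the sentence immediately before the statement says ``The following result which we state without proof was established in \cite{S}\dots''). So there is no argument in the paper to compare your sketch against; the paper simply cites the result.

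On your sketch itself, two comments. First, the Gaussian-intermediary detour is unnecessary: the results in \cite{S} that the paper cites already couple a general i.i.d. sequence $\{X^{(m)}\}$ directly to a second i.i.d. sequence $\{Y^{(m)}\}$ with a prescribed law having matching first and second moments, together with the staircase independence. Splitting this into $X\to G$ and $Y\to G$ and recombining by the triangle inequality buys nothing and costs you the constant (and, more importantly, an extra layer to check). Second, the independence claim does not obviously survive your two-step composition. You need $Y^{(k)},Y^{(k+1)},\dots$ to be independent of $X^{(1)},\dots,X^{(k-1)}$; for that, both couplings must be one-sided in the \emph{same} direction (e.g., $G^{(k)}$ measurable with respect to $X^{(k)},X^{(k+1)},\dots$ and auxiliary noise, and $Y^{(k)}$ measurable with respect to $G^{(k)},G^{(k+1)},\dots$ and further auxiliary noise). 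If instead $G^{(k)}$ is adapted to $X^{(1)},\dots,X^{(k)}$, then $Y^{(k)}$ inherits dependence on $X^{(1)},\dots,X^{(k-1)}$ and the staircase fails. Verifying that \cite{S} provides a single-sequence coupling with exactly this one-sided measurability \emph{and} the sharp $n/z^{3}$ tail is precisely the content you are trying to outsource; since you end up invoking \cite{S} for the hard part anyway, the clean thing to do is what the paper does: quote the theorem and cite.
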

>From Lemmas \ref{lem3.1}--\ref{lem3.4} it follows that there exists a constant $C_5$ such that
for any $n\in\mathbb{N}$, $|V-V_n|\leq C_5 n^{-1/8}+|V^{(2)}_n-V^{(3)}_n|$. Thus in order to complete the proof of Theorem \ref{thm2.1}
it remains to establish the following lemma.
\begin{lem}\label{lem4.1}
For any $\epsilon>0$ there exists a constant $\tilde{C}_{\epsilon}$ such that for any $n$
\begin{equation}\label{4.3}
|V^{(2)}_n-V^{(3)}_n|<\tilde{C}_{\epsilon}n^{\epsilon-\frac{1}{8}}.
\end{equation}
\end{lem}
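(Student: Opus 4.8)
The two models underlying $V^{(2)}_n$ and $V^{(3)}_n$ differ only in that the standardized Brownian increments $\sqrt{n/T}\,(W(mT/n)-W((m-1)T/n))$ driving the martingale part are replaced by the i.i.d.\ vectors $\xi^{(m)}$ of \eqref{2.9}, while the jump parts, the normalizing denominators and the discretized jump sizes coincide in law ($u^{n,j}\sim U^{n,j}$); by \eqref{2.9+} the $\xi^{(m)}$ have the same mean and covariance as a standard Gaussian vector. So the plan is: (i) build one probability space on which the $\xi$--random walk and a Gaussian random walk are coupled by Theorem \ref{thm4.1}; (ii) use Lemma \ref{lem3.0} to write both $V^{(2)}_n$ and $V^{(3)}_n$ as dynamical--programming values depending only on the laws of the driving sequences, so that this recoupling does not change them; (iii) estimate the difference of the two payoffs pathwise on the event where the two walks are uniformly close, and bound the complementary event by a second--moment estimate; (iv) optimize the threshold.

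For (i) apply Theorem \ref{thm4.1} with $X^{(m)}=\xi^{(m)}$ (bounded, hence all moments finite) and $Y$ a standard $d$--dimensional Gaussian vector; the hypotheses hold by \eqref{2.9+}. This gives, on an extension $(\hat\Om,\hat{\cF},\hat P)$, i.i.d.\ Gaussians $\eta^{(m)}\sim Y$ with, for any chosen moment order $q$ and some $C_q$,
\begin{equation*}
\hat P\Bigl(\max_{1\le k\le n}\sum_{i=1}^{d}\Bigl|\sum_{m=1}^{k}(\xi^{(m)}_i-\eta^{(m)}_i)\Bigr|>z\Bigr)\le\frac{C_q\,n}{z^{q}} ,
\end{equation*}
together with the independence property of Theorem \ref{thm4.1} (the general version in \cite{S} allows $q$ up to the order of the finite moments, here arbitrary). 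On this space I realize $\hat S^{3,n}$ from $\xi^{(m)},\rho^{n,m},u^{n,m}$ exactly as in \eqref{3.24}, and a copy $\hat S^{2,n}$ by using $\eta^{(m)}$ in place of the Brownian increments and the same $\rho^{n,m},u^{n,m}$ in place of the Poisson--derived jump counts and sizes; since $(\rho^{n,m})_m\sim(\mathbb{I}_{\{N(mT/n)-N((m-1)T/n)\ge1\}})_m$ and $u^{n,m}\sim U^{n,m}$, the process $\hat S^{2,n}$ has the law of the model behind $V^{(2)}_n$, and $\hat S^{2,n}$, $\hat S^{3,n}$ now differ only through $\eta^{(m)}$ versus $\xi^{(m)}$.

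For (ii)--(iii): by Lemma \ref{lem3.0}, applied as in the proof of Lemma \ref{lem3.3} (the one--step ratios of each process form an i.i.d.\ sequence and, at step $k$, are independent of the past of the joint filtration by the independence clause of Theorem \ref{thm4.1}), $V^{(2)}_n$ and $V^{(3)}_n$ equal the optimal stopping values of $e^{-r\tau}F(\hat S^{2,n}(\tau),\tau)$ and of $e^{-r\tau}F(\hat S^{3,n}(\tau),\tau)$ over stopping times in the common filtration $\hat{\cG}_k=\sigma\bigl(\hat S^{2,n}(jT/n),\hat S^{3,n}(jT/n):0\le j\le k\bigr)$. Since these two suprema run over the same family, \eqref{2.7+} yields $|V^{(2)}_n-V^{(3)}_n|\le L\,\hat E\sum_{i=1}^{d}\max_{0\le k\le n}|\hat S^{2,n}_i(kT/n)-\hat S^{3,n}_i(kT/n)|$. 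On $G_z=\{\max_k\sum_i|\sum_m(\xi^{(m)}_i-\eta^{(m)}_i)|\le z\}$ the exponents of $\hat S^{2,n}_i$ and $\hat S^{3,n}_i$ differ by at most $\ka z/\sqrt n$ with $\ka=\sqrt T\max_{i,j}|\sigma_{ij}|$, so imposing $z\le\sqrt n/\ka$ and using $|e^{x}-e^{y}|\le|x-y|e^{\max(x,y)}$,
\begin{equation*}
\max_{0\le k\le n}|\hat S^{2,n}_i(kT/n)-\hat S^{3,n}_i(kT/n)|\le e\,\frac{\ka z}{\sqrt n}\,\sup_{0\le t\le T}\hat S^{3,n}_i(t)\qquad\text{on }G_z ,
\end{equation*}
whereas on $G_z^{c}$ the left side is $\le\sup_t(\hat S^{2,n}_i(t)+\hat S^{3,n}_i(t))$. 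Using Cauchy--Schwarz on $G_z^{c}$ together with $\hat P(G_z^{c})\le C_q n/z^{q}$ and the uniform $L^2$--bounds on $\sup_t\hat S^{2,n}_i(t),\sup_t\hat S^{3,n}_i(t)$ from the proofs of Lemmas \ref{lem3.3}--\ref{lem3.4}, one gets, for $1\le z\le\sqrt n/\ka$,
\begin{equation*}
|V^{(2)}_n-V^{(3)}_n|\le C_1\,\frac{z}{\sqrt n}+C_2\sqrt{\frac{n}{z^{q}}} .
\end{equation*}
Taking $z=n^{3/8}$ and $q$ large (in terms of $\epsilon$) makes both terms $O(n^{\epsilon-1/8})$, which is \eqref{4.3}.

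The main obstacle is step (ii): verifying that passage to the enlarged space and to the larger filtration $\hat{\cG}$ does not alter $V^{(2)}_n$ or $V^{(3)}_n$. One must check that at step $k$ the one--step ratio of $\hat S^{2,n}$ (built from $\eta^{(k)},\rho^{n,k}$ and the next $u$) is independent of $\sigma(\xi^{(j)},\eta^{(j)}:j\le k-1)$, and symmetrically that the one--step ratio of $\hat S^{3,n}$ (built from $\xi^{(k)}$) is independent of the same $\sigma$--field; this is precisely the independence structure built into the coupling of \cite{S}, and it is what makes Lemma \ref{lem3.0} applicable over $\hat{\cG}$. The other point requiring care is the bookkeeping of powers of $n$: the bare $q=3$ version of Theorem \ref{thm4.1} yields only $n^{-1/10}$ here, so one must use that $\xi^{(m)}$ is bounded (so $q$ may be taken as large as needed) in order to bring the exponent below $-1/8$, which is the origin of the $\epsilon$. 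Everything else is routine, as in Section 3.
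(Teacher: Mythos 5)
Your strategy is the paper's: couple the $\xi$--walk with a Gaussian walk via Theorem \ref{thm4.1}, use Lemma \ref{lem3.0} to argue that passing to the coupled space does not change the two option values, and split $\hat E\sup_t\sum_i|\cdot|$ into a good event (exponents uniformly close) and a bad event of small probability. Two steps, however, do not go through as written. The first is the point you yourself single out as ``the main obstacle'': the independence clause of Theorem \ref{thm4.1} is one--sided. It asserts that $X^{(1)},\dots,X^{(k-1)},Y^{(k)},\dots,Y^{(n)}$ are independent, i.e.\ the \emph{constructed} vectors are independent of the past of the \emph{given} sequence, but it does not assert that $X^{(k)}$ is independent of $\sigma(X^{(j)},Y^{(j)}:j\le k-1)$. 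Consequently only the $Y$--driven price admits the dynamical--programming representation of Lemma \ref{lem3.0} over the joint filtration $\hat{\cG}$; for the $X$--driven price one only has the inclusion of its natural filtration into $\hat{\cG}$, which yields a one--sided inequality between suprema. Your claim that both representations hold ``symmetrically'' over $\hat{\cG}$ is precisely what the theorem does not give. The paper's remedy is to prove the two one--sided inequalities separately, running the coupling twice with the roles of $X$ and $Y$ exchanged ($X^{(1)}\sim W(T/n)$, $Y^{(1)}\sim\sqrt{T/n}\,\xi^{(1)}$ for one direction, and vice versa for the other); you need the same device.

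The second issue is the tail bound $\hat P(G_z^c)\le C_q n/z^{q}$ for arbitrary $q$: this is not what Theorem \ref{thm4.1} provides, since \eqref{4.2} is a third--moment estimate, and, as you correctly compute, combining the $q=3$ bound with Cauchy--Schwarz on the bad event yields only $n^{-1/10}$. But the fix is not to upgrade the coupling. The paper keeps $q=3$, takes the threshold $n^{-1/8}$ for the sums of the already--$\sqrt{T/n}$--scaled increments (so that $\tilde P(Q)\le C_6 n^{-1/8}$, since the third moments are $O(n^{-3/2})$), and on the bad event replaces Cauchy--Schwarz by H\"older with exponents $p=1/(1-8\epsilon)$ and $q=1/(8\epsilon)$, using the uniform $L^p$--bounds \eqref{4.4++} on the suprema of the price processes; this gives $(C_6 n^{-1/8})^{1-8\epsilon}=O(n^{\epsilon-1/8})$ and is the origin of the $\epsilon$ in the statement. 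Your route would also work if the stronger moment inequalities of \cite{S} were imported, but as it stands it rests on an estimate not established in the paper, whereas the H\"older argument is self--contained and shows the extra moments of the driving sequence are not needed -- only moments of the (log--normal type) price suprema, which are free.
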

\begin{proof}
Fix $n\in\mathbb{N}$ and $\epsilon>0$.
We start with proving the inequality
\begin{equation}\label{4.3+}
V^{(2)}_n-V^{(3)}_n<\tilde{C}_{\epsilon}n^{\epsilon-\frac{1}{8}}.
\end{equation}
>From (\ref{2.9+}) and Theorem \ref{thm4.1} it follows that we can construct on the same probability space $(\tilde{\Omega},\tilde{F},\tilde{P})$ two sequence of i.i.d. random vectors
$X^{(1)},...,X^{(n)}$,
$Y^{(1)},...,Y^{(n)}$ such that $X^{(1)}\sim W(\frac{T}{n})$, $Y^{(1)}\sim \sqrt\frac{T}{n}\xi^{(1)}$
and
\begin{equation}\label{4.4}
\tilde{P}(\max_{1\leq k\leq n}\sum_{i=1}^d |\sum_{m=1}^k X^{(m)}_i-Y^{(m)}_i|>n^{-1/8})\leq C_6 n^{-1/8}
\end{equation}
for some constant $C_6$. Furthermore, for any $k>1$ the random vectors $X^{(1)},...,X^{(k-1)},Y^{(k)}$\\
$,...,Y^{(n)}$
are independent.
We can extend the constructed probability space such that it will contain also two independent sequences of i.i.d.
random vectors $F^{(1)},...,F^{(n)}$ and $G^{(1)},...,G^{(n)}$ which are independent of $X^{(1)},...,X^{(n)}, Y^{(1)},...,Y^{(n)}$
and satisfy  $F^{(1)}\sim \rho^{n,1}$,  $G^{(1)}\sim U^{(n,1)}$ ($\rho^{n,1}$, $U^{n,1}$ were defined in Section 2).

For any $p\leq n$ and $i\leq d$ set,
\begin{eqnarray}\label{4.4+}
&Z^{(p)}=\sum_{j=1}^p F^{(j)}, \ D_i=\max_{0\leq k \leq n}\frac{\prod_{j=1}^{Z^{(k)}} (1+G^{(j)}_i)}{\big(1+(1-\exp(-\lambda T/n))\tilde{E}G^{(1)}_i\big)^k},\\
&D^X_i=S_i(0)\exp(rT)\max_{0\leq k\leq n}\exp(\sum_{m=1}^k\sum_{j=1}^d(\sigma_{ij}X^{(m)}_j-\sigma^2_{ij}T/(2n)))\nonumber\\
&\mbox{and} \  D^Y_i=S_i(0)\exp(rT)\max_{0\leq k\leq n}\exp(\sum_{m=1}^k\sum_{j=1}^d(\sigma_{ij}Y^{(m)}_j-\sigma^2_{ij}T/(2n))).\nonumber
\end{eqnarray}
It is easy to verify that there exists a constant $\hat{C}$ and for any $p\geq 1$ there exists a constant $\hat{C}_p$ (the above constants does not depend on $n$) such that
\begin{equation}\label{4.4++}
\begin{split}
\sum_{i=1}^d ED_i<\hat{C} \ \ \mbox{and} \ \ E((\sum_{i=1}^d D^X_i+D^Y_i)^p)<\hat{C}_p.
\end{split}
\end{equation}

Define
$\tilde{S}(t)=(\tilde{S}_1(t),...,\tilde{S}_d(t))$
and  $\hat{S}(t)=(\hat{S}_1(t),...,\hat{S}_d(t))$ by
\begin{eqnarray}\label{4.5}
&\tilde{S}_i(t)=S_i(0)\exp(rkT/n+\sum_{m=1}^k\sum_{j=1}^d(\sigma_{ij}X^{(m)}_j-\sigma^2_{ij}T/(2n)))\times\\
&\frac{\prod_{j=1}^{Z^{(k)}} (1+G^{(j)}_i)}{\big(1+(1-\exp(-\lambda T/n))\tilde{E}G^{(1)}_i\big)^k} \ \ \mbox{and} \ \
 \hat{S}_i(t)=S_i(0)\exp(rkT/n+\nonumber\\
 &\sum_{m=1}^k\sum_{j=1}^d(\sigma_{ij}Y^{(m)}_j-\sigma^2_{ij}T/(2n)))
\frac{\prod_{j=1}^{Z^{(k)}} (1+G^{(j)}_i)}{\big(1+(1-\exp(-\lambda T/n))\tilde{E}G^{(1)}_i\big)^k},\nonumber
\end{eqnarray}
where $1\leq i\leq d$, $0\leq k\leq n$ and $\frac{kT}{n}\leq t<\frac{(k+1)T}{n}$.

Next, let $\tilde{T}$ be the set of stopping times with respect to the filtration  ${\{\sigma(\tilde{S}_i(u)|u\leq t)\}}_{t=0}^T$
with values in $\{0,\frac{T}{n},\frac{2T}{n},...,T\}$ and let $\hat{T}$ be the set of stopping times with respect to the filtration  ${\{\sigma(\tilde{S}(u),\hat{S}(u)|u\leq t)\}}_{t=0}^T$ with values in $\{0,\frac{T}{n},\frac{2T}{n},...,T\}$.
Set,
\begin{eqnarray}\label{4.6}
&\tilde{V}=\sup_{\tau\in\tilde{T}}\tilde{E}(\exp(-r\tau)F(\tilde{S}(\tau),\tau)), \ \
\hat{V}=\sup_{\tau\in\hat{T}}\tilde{E}(\exp(-r\tau)F(\hat{S}(\tau),\tau))\\
&\mbox{and} \ \ Q=\{\max_{1\leq k\leq n}\sum_{i=1}^d|\sum_{m=1}^k X^{(m}_i-Y^{(m)}_i|>n^{-1/8}\}.\nonumber
\end{eqnarray}
Observe that $\tilde{S}\sim S^{2,n}$ and $\hat{S}\sim S^{3,n}$.
>From Lemma \ref{lem3.0} it follows that $\tilde{V}=V^{(2)}_n$ and $\hat{V}=V^{(3)}_n$, in the last equality we used the fact that  $X^{(1)},...,X^{(k-1)},Y^{(k)},...,Y^{(n)}$ are independent for any $k>1$.
Since $\tilde{T}\subset\hat{T}$
then from (\ref{4.4}), (\ref{4.4++})--(\ref{4.6}), the inequality $|\exp(x)-\exp(y)|\leq |x-y|\max(\exp(x),\exp(y))$
and the Holder inequality we obtain
\begin{eqnarray}\label{4.6+}
&V^{(2)}_n-V^{(3)}_n\leq\sup_{\tau\in\hat{T}}\tilde{E}(\exp(-r\tau)F(\tilde{S}(\tau),\tau)-\\
&\sup_{\tau\in\hat{T}}\tilde{E}(\exp(-r\tau)F(\hat{S}(\tau),\tau))\leq L\tilde{E}(\sup_{0\leq t\leq T}\sum_{i=1}^d |\tilde{S}_i(t)-\hat{S}_i(t)|)\nonumber\\
&\leq L \tilde{E} ((n^{-1/8}\max_{1\leq i,j\leq d}|\sigma_{ij}|+\mathbb{I}_Q)(\sum_{i=1}^d D_i(D^X_i+D^Y_i))) \nonumber\\
&\leq L \hat{C}((C_6)^{1/p} n^{-1/8+\epsilon}(\hat{C}_{q})^{1/q}+\max_{1\leq i,j\leq d} |\sigma_{ij}| \hat{C}_1n^{-1/8}) \nonumber
\end{eqnarray}
where $p=\frac{1}{1-8\epsilon}$ and $q=\frac{1}{8\epsilon}$. This completes the proof of the inequality
(\ref{4.3+}). The inequality $V^{(3)}_n-V^{(2)}_n<\tilde{C}_{\epsilon}n^{\epsilon-\frac{1}{8}}$ can be proved in a similar way, just take
 $X^{(1)}\sim \sqrt\frac{T}{n}\xi^{(1)}$ and $Y^{(1)}\sim  W(\frac{T}{n})$.
\end{proof}
\begin{rem}\label{rem4.1}
In  several cases our approach can be extended also
for path dependent payoffs which satisfy
Lipschitz type conditions. Let $M[0,t]$ be the space of Borel
measurable functions $\upsilon=(\upsilon_1,...,\upsilon_d):[0,t]\rightarrow \mathbb{R}^d$
with the uniform metric
$d_{0t}(\upsilon,\tilde{\upsilon})=\sup_{0\leq{s}\leq{t}}\sum_{i=1}^d|\upsilon_i(s)
$\\
$-\tilde{\upsilon}_i(s)|$. For each $t>0$ let $F_t$ be
a nonnegative function on $M[0,t]$ such that for any $t\geq{s}\geq{0}$ and
$\upsilon,\tilde{\upsilon}\in{M[0,t]}$,
\begin{eqnarray}\label{4.7}
&|F_s(\upsilon)-F_s(\tilde{\upsilon})|\leq{Ld_{0s}(\upsilon,\tilde{\upsilon})}
\ \ \mbox{and} \  \
|F_t(\upsilon)-F_s({\upsilon})|\nonumber\\
&\leq{L(|t-s|(1+\sup_{u\in{[0,t]}}\sum_{i=1}^d|\upsilon_i(u)|)
+\sup_{u\in{[s,t]}}\sum_{i=1}^d|\upsilon_i(u)-\upsilon_i(s)|)}.\nonumber
\end{eqnarray}
Consider an American option with the payoff process
$Y(t)=F_t(S)$
where $S=S(\omega)\in{M[0,\infty)}$ is a
random function taking the value $S(\omega,t)$ at
$t\in{[0,\infty)}$. When considering
 $F_t(S^B)$ for $t<\infty$ we take the
restriction of $S$ to the interval $[0,t]$.
The term $V=\sup_{\tau\in\mathcal{T}}E(\exp(-r\tau)Y(\tau))$ gives an arbitrage--free price for our model.
For the multinomial models we consider the payoffs
$Y^{(n)}(k)=F_{\frac{kT}{n}}(S^{(n)})$ and the arbitrage--free prices
$V_n=\sup_{\tau\in\mathcal{T}_n} E_n(\exp(-\tau T/n)Y^{(n)}(\tau))$.
Lemmas \ref{lem3.1}, \ref{lem3.4} and \ref{lem4.1}
can be extended to this setup in a way that does not ruin the estimates of Theorem \ref{thm2.1}. The problem is with Lemmas \ref{lem3.2}--\ref{lem3.3}. For path dependent options the equality before $(\ref{3.15})$ does not follows from Lemma \ref{lem3.1} (although is correct) and the the first inequality in (\ref{3.23}) should be replaced by
$|V^{(3)}_n-V^{(2)}_n|\leq L E\sum_{i=1}^d \sup_{0\leq t \leq T } |{S}^{2,n}_i(t)-{S}^{1,n}_i(t)|.$
The way to fix it is to consider a piecewise constant approximations of $S$, $\dot{S}^{(n)}(t)=S_{kT/n}$ for $kT/n\leq t<(k+1)T/n$, $0\leq k\leq n$. If we could provide an estimates of the term $E\max_{0\leq k\leq n} |F_{\frac{kT}{n}}(S)-F_{\frac{kT}{n}}(\dot{S}^{(n)})|$ then both of the Lemmas \ref{lem3.2}--\ref{lem3.3} could be extended.
Observe that,
$E(\max_{0\leq k\leq n} |F_{\frac{kT}{n}}(S)-F_{\frac{kT}{n}}(\dot{S}^{(n)})|)\leq L E(\sup_{0\leq t\leq T}\sum_{i=1}^d |S_i(t)-\dot{S}^{(n)}_i(t)|).$
For the Black--Scholes model (no Poisson process) the last term was estimated in (\ref{3.1+++}). Thus for the Black--Scholes model the estimates from Theorem \ref{thm2.1}
remain valid for path dependent options which satisfy Lipschitz type conditions.
However, if we allow jumps of compound Poisson type than in general the term $E(\sup_{0\leq t\leq T}\sum_{i=1}^d |S_i(t)-\dot{S}^{(n)}_i(t)|)$
should not tend to $0$. Consider a specific type of path dependent options which are given by $F_t(S)=\max(M,\sup_{0\leq t\leq T}\max_{1\leq i \leq d} S_i(t))$, where $M>0$ is some constant (Russian options). Recall the terms $J^{(i)}_n$ that were defined in (\ref{3.1+}) and the events $Q_k$ which were defined before (\ref{3.22}). For Russian options we have the following inequality
\begin{eqnarray*}
&\max_{0\leq k\leq n} |F_{\frac{kT}{n}}(S)-F_{\frac{kT}{n}}(\dot{S}^{(n)})|\leq L\mathbb{I}_{\{\cup_{j=1}^n Q_j\}}\sup_{0\leq t\leq T}F_t(S)+\\
&\sup_{0\leq t\leq T}\prod_{j=1}^{N(t)}(1+U^{(j)}_i)\sum_{i=1}^d J^{(i)}_n.
\end{eqnarray*}
By using estimates that were derived in Section 3 we obtain that $E(\max_{0\leq k\leq n} |F_{\frac{kT}{n}}(S)-F_{\frac{kT}{n}}(\dot{S}^{(n)})|)$ is of order $n^{-1/4}$ and so the estimates from Theorem \ref{thm2.1}
are also valid for Russian options in Merton's model.
\end{rem}
${}$\\
\textbf{\Large{Acknowledgments:}} \\
\\
I would like to express my deepest gratitude to
my P.hD adviser, Yuri Kifer, for guiding me and helping me to present this work.
I am also very grateful to A.Zaitsev for valuable discussions.
This research was partially supported by ISF grant no. 130/06.}

\end{document}